\newcommand{\Ab}{\mathbf{A}}
\newcommand{\fb}{\mathbf{f}}
\newcommand{\hb}{\mathbf{h}}
\newcommand{\nb}{\mathbf{n}}
\newcommand{\rb}{\mathbf{r}}
\newcommand{\ub}{\mathbf{u}}
\newcommand{\vb}{\mathbf{v}}
\newcommand{\wb}{\mathbf{w}}
\newcommand{\xb}{\mathbf{x}}
\newcommand{\yb}{\mathbf{y}}
\newcommand{\Hb}{\mathbf{H}}
\newcommand{\Ib}{\mathbf{I}}
\newcommand{\Rb}{\mathbf{R}}
\newcommand{\Ub}{\mathbf{U}}
\newcommand{\Vb}{\mathbf{V}}
\newcommand{\Cbb}{\mathbb{C}}
\newcommand{\Ebb}{\mathbb{E}}
\newcommand{\Rbb}{\mathbb{R}}
\newcommand{\Lcal}{\mathcal{L}}
\newcommand{\Ncal}{\mathcal{N}}
\newcommand{\Ocal}{\mathcal{O}}
\newcommand{\CNcal}{\mathcal{CN}}
\newcommand{\Sigmab}{\boldsymbol{\Sigma}}
\newcommand{\rhob}{\boldsymbol{\rho}}
\newcommand{\dB}{\operatorname{dB}}
\newcommand{\Prob}[1]{\mathbf{P#1}}
\newtheorem{lemma}{Lemma}
\begin{document}

\title{Integrated Sensing-Communication-Computation for Over-the-Air Edge AI Inference}

\author{Zeming~Zhuang$^{\orcidlink{0000-0001-8846-523X}}$,~\IEEEmembership{Student~Member,~IEEE,} Dingzhu~Wen$^{\orcidlink{0000-0003-0538-5811}}$,~\IEEEmembership{Member,~IEEE,} Yuanming~Shi$^{\orcidlink{0000-0002-1418-7465}}$,~\IEEEmembership{Senior~Member,~IEEE,} Guangxu~Zhu$^{\orcidlink{0000-0001-9532-9201}}$,~\IEEEmembership{Member,~IEEE,} Sheng~Wu$^{\orcidlink{0000-0002-9947-9968}}$,~\IEEEmembership{Member,~IEEE,} and~Dusit~Niyato$^{\orcidlink{0000-0002-7442-7416}}$,~\IEEEmembership{Fellow,~IEEE} 

\thanks{Z. Zhuang, D. Wen, and Y. Shi are with Network Intelligence Center, School of Information Science and Technology, ShanghaiTech University, Shanghai, China (e-mail: \{zhuangzm, wendzh, shiym\}@shanghaitech.edu.cn). (Corresponding authors: D. Wen and Y. Shi).}
\thanks{G. Zhu is with Shenzhen Research Institute of Big Data, Shenzhen, China (e-mail: gxzhu@sribd.cn).}
\thanks{S. Wu is with the School of Information and Communication Engineering, Beijing University of Posts and Telecommunications, Beijing 100876, China (e-mail: thuraya@bupt.edu.cn).}
\thanks{D. Niyato is with School of Computer Science and Engineering, Nanyang Technological University, Singapore (e-mail: dniyato@ntu.edu.sg).}
}



\maketitle

\begin{abstract}
    Edge-device co-inference refers to deploying well-trained artificial intelligent (AI) models at the network edge under the cooperation of devices and edge servers for providing ambient intelligent services. For enhancing the utilization of limited network resources in edge-device co-inference tasks from a systematic view, we propose a task-oriented scheme of integrated sensing, computation and communication (ISCC) in this work. In this system, all devices sense a target from the same wide view to obtain homogeneous noise-corrupted sensory data, from which the local feature vectors are extracted. All local feature vectors are aggregated at the server using over-the-air computation (AirComp) in a broadband channel with the orthogonal-frequency-division-multiplexing technique for suppressing the sensing and channel noise. The aggregated denoised global feature vector is further input to a server-side AI model for completing the downstream inference task. A novel task-oriented design criterion, called maximum minimum pair-wise discriminant gain, is adopted for classification tasks. It extends the distance of the closest class pair in the feature space, leading to a balanced and enhanced inference accuracy. Under this criterion, a problem of joint sensing power assignment, transmit precoding and receive beamforming is formulated. The challenge lies in three aspects: the coupling between sensing and AirComp, the joint optimization of all feature dimensions' AirComp aggregation over a broadband channel, and the complicated form of the maximum minimum pair-wise discriminant gain. To solve this problem, a task-oriented ISCC scheme with AirComp is proposed. Experiments based on a human motion recognition task are conducted to verify the advantages of the proposed scheme over the existing scheme and a baseline.
\end{abstract}



\section{Introduction}
The next generation of wireless technology (6G) will go far beyond just communication services to push forward an era of true Intelligence of Everything (IoE) for providing immersive intelligent services like auto-driving, Metaverse, smart city, etc. \cite{letaief2019the, zhu2020toward, nguyen20226g, shi2020communication, xu2023unleashing, wang2022federated}. However, the realization of these services highly depends on utilizing the inference capability of well-trained AI models at the network edge for intelligent decision making. This gives rise to a new research topic called edge AI inference, or edge inference \cite{liu2023resource, shao2020comm, letaief2022edge,shi2023taskoriented}. 

The implementation of edge inference includes three paradigms, i.e., on-device inference, on-server inference and edge-device co-inference. In on-device inference, well-trained AI models are downloaded by edge devices for executing inference tasks, leading to heavy computation overhead (see, \cite{lee2023decent,cai2020onceforall,lu2022improving}). To alleviate the computation bottleneck at devices, the on-server inference uploads the raw data samples from devices to an edge server, where large-scale AI models are deployed for inference (see, \cite{yang2020energy, hua2021reconf, yang2020sparse}). This, however, violates the data privacy of edge devices. To further address the privacy issue, the edge-device co-inference emerges as a promising solution (see, \cite{shi2019improving, shao2021branchy, niu2022an, yan2022optimal}). It divides an AI model into two parts. The front-end part has a smaller size and is deployed at devices for feature extraction. The computation-intensive back-end part is deployed at the server, which leverages the received local feature vectors to complete the remaining inference task. As a result, computation is offloaded to the edge server and the avoidance of raw data transmission keeps devices' data privacy. Hence, the edge-device co-inference paradigm is adopted in this work.

Recently, the edge-device co-inference has experienced a rapid advancement. The first research focus is to balance the trade-off between communication and computation. In \cite{shi2019improving, shabbeer2022simple}, the neural network was pruned at training phase to avoid the huge communication overhead caused by in-layer data amplification phenomenon. A suitable split layer selection method was developed in \cite{zhang2021comm} together with the scheme for encoding/decoding the intermediate feature vector by an automated machine learning (AutoML) framework. Besides, methods of setting early exiting points in neural networks were proposed in \cite{liu2023resource, li2020edge, wang2020dual} to balance the communication and computation overhead under a given empirical inference accuracy threshold. The authors in \cite{niu2022an} further combined the methods of early exiting, model partitioning and data quantization to improve the inference performance. A joint source and channel coding (JSCC) approach was developed in \cite{jankowski2021wireless} to map feature vectors into channel symbols. Nevertheless, as stated by \cite{shao2022learning,shao2023task,wen2022task2,zhu2023pushing}, edge inference features a task-oriented property where the effectiveness and efficiency of the inference task execution are of crucial significance. As a result, the conventional design criteria including communication capacity or signal-to-noise ratio (SNR) of received signals work no longer well, as they cannot differentiate the feature elements with the same size and distortion level but different contributions on inference accuracy \cite{wen2022task2}. To address this issue, this work proposes to directly use the inference accuracy as the design criterion.

One main challenge of designing task-oriented schemes is that the instantaneous inference accuracy is unknown and has no mathematical model. To address this issue, the authors in \cite{lan2022progressive} proposed an approximate but tractable metric, called discriminant gain. By considering classification tasks and based on the assumption that the feature vector follows a Gaussian mixture distribution with each Gaussian component corresponding to one class, a pair-wise discriminant gain for two arbitrary classes (called a class pair) is defined as the symmetric Kullback-Leibler (KL) divergence of their distributions. With a larger pair-wise discriminant gain, the two classes can be easily differentiated in the feature space, leading to an enhanced achievable inference accuracy. Existing works (see, \cite{lan2022progressive,wen2022task1, wen2022task2}) use the average of all pair-wise discriminant gains as the design objective. This, however, causes an unbalanced inference accuracy of different classes and degrades the overall inference performance. As shown in Fig.~\ref{figgr:DG}(a), under this design goal, one particular class (i.e., Class 1) may be far separated from all other classes (i.e., Classes 2 and 3), which could be very close to each other in the feature space. 
\begin{figure}[!t]
    \centering
    \subfloat[Maximize average discriminant gain]{
        \includegraphics[width=0.8\columnwidth]{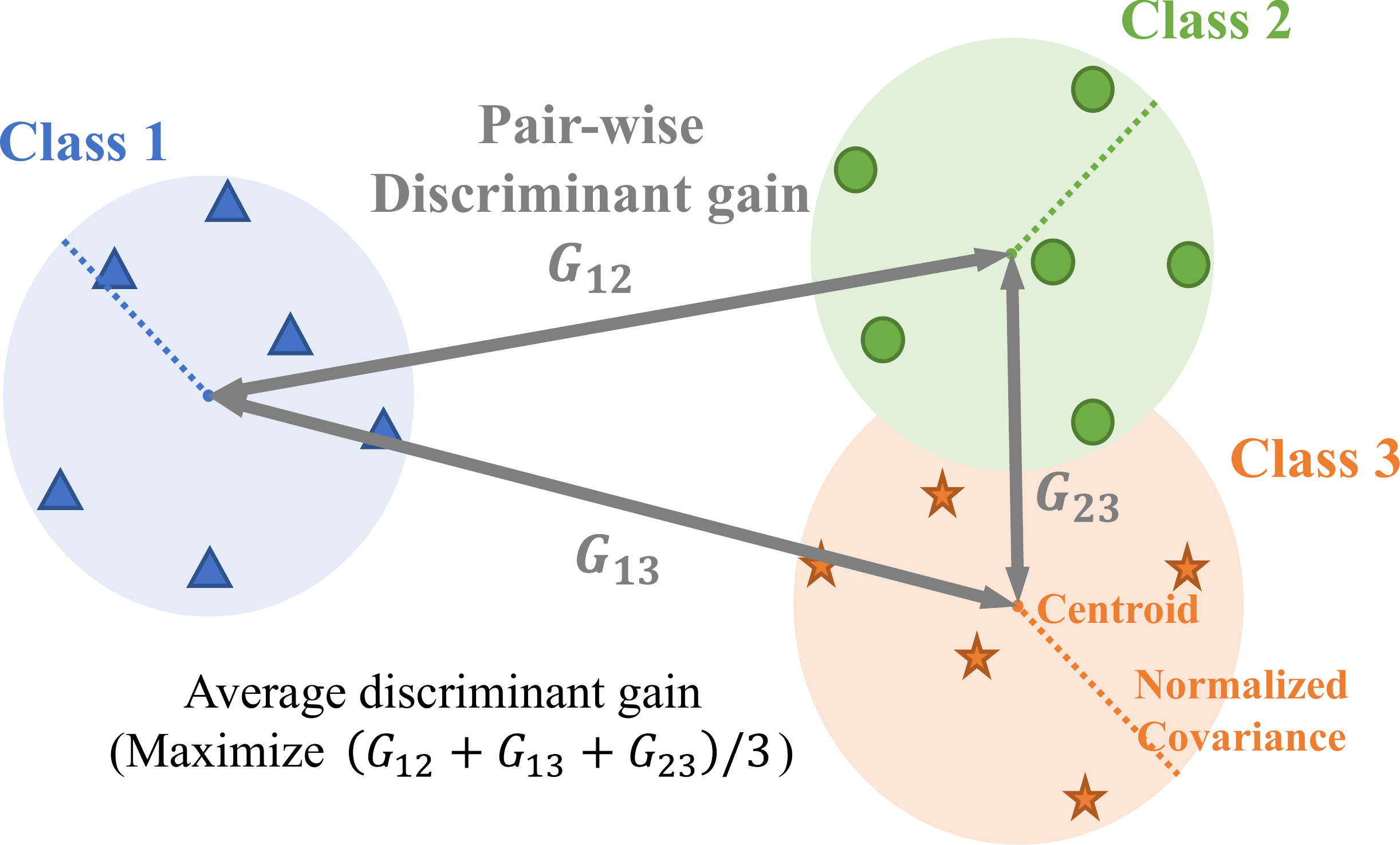}
        \label{fig:avgDG}
    }\\
    \subfloat[Maximize minimum pair-wise discriminant gain]{
        \includegraphics[width=0.8\columnwidth]{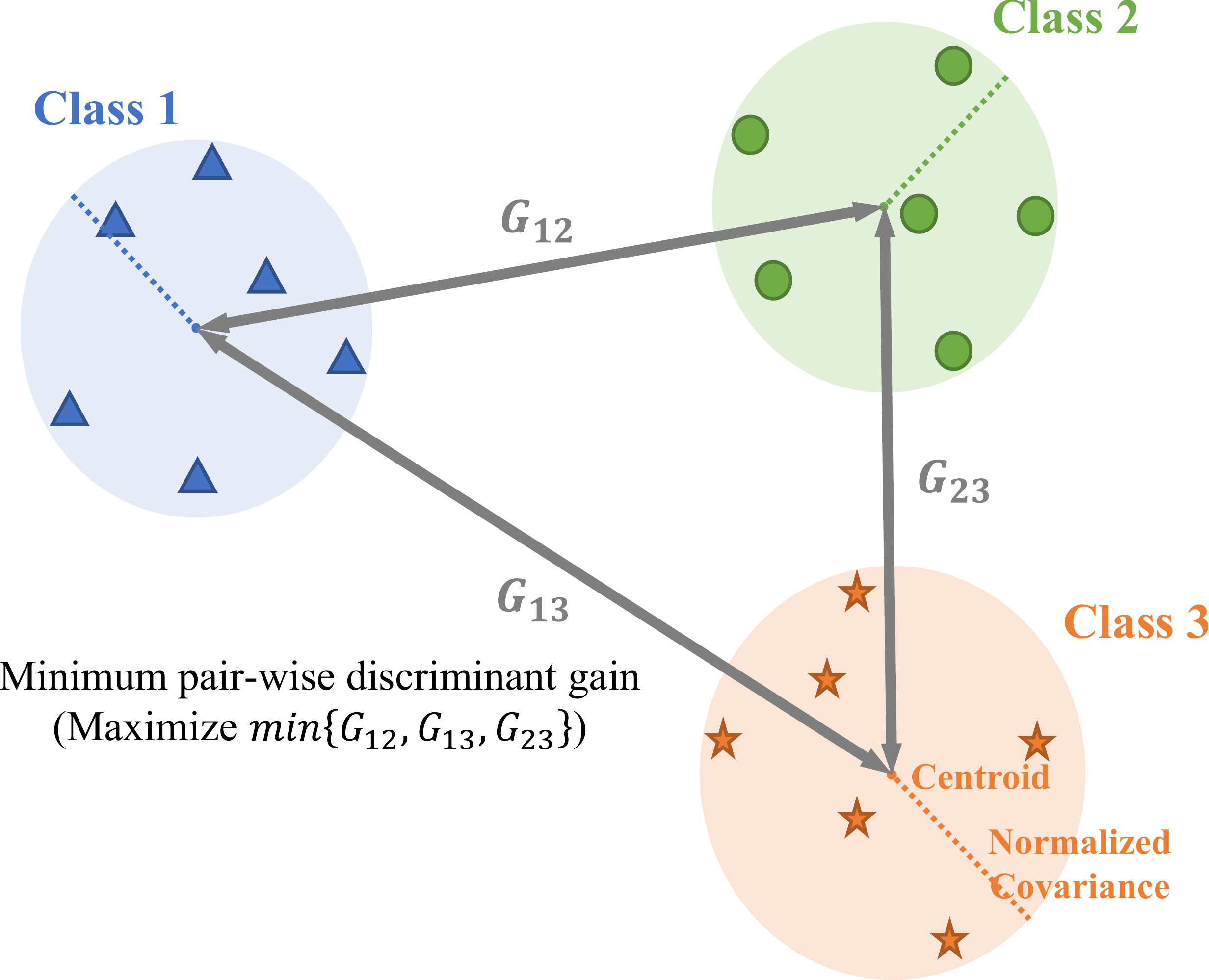}
        \label{fig:minDG}
    }
    \caption{Average discriminant gain maximization v.s. minimum pair-wise discriminant gain maximization.}
    \label{figgr:DG}
\end{figure}
To address this issue, in this work, we target maximizing the minimum pair-wise discriminant gain, which guarantees the closest class pair can be well separated in the feature space, as shown in Fig.~\ref{figgr:DG}(b).


On the other hand, although the previous works can enhance the inference performance, they optimize the edge-device co-inference systems from a partial view (i.e., the perspectives of communication or computation or both), which ignores the influence of the data acquisition process on inference performance and focuses on task offloading, model partitioning or data compressing (see, \cite{hu2023content,lan2022progressive,niu2022an}). Also, many existing works on multi-device ISAC framework have been proposed and developed \cite{wen2023integrated} (e.g., UAV deployment \cite{tang2023integrated}, data redundancy exploitation and sensing-communication switching \cite{li2022multi}). However, they cannot achieve the full potential for enhancing the inference performance. As stated in \cite{wen2022task1}, the fulfillment of an edge-device co-inference task requires the cooperation of sensing for data acquisition, computation for feature extraction and communication for feature transmission, at edge devices. The inference accuracy depends on the feature distortion level caused during the data acquisition, computation and communication three processes. Besides, they compete for network resources including time and energy for suppressing their own distortion. Hence, edge-device co-inference calls for integrated sensing, communication and computation (ISCC) schemes \cite{wen2022task1}. To this end, a task-oriented scheme was proposed in \cite{wen2022task1} for maximizing the inference accuracy. However, the aforementioned work investigates the scenario of narrow-view sensing, which refers to that all devices perceive disjoint small ranges of a source target to obtain high-quality low-dimensional sensory data. There is a lack of ISCC schemes for handling the scenario of wide-view sensing, where each device perceives the same wide range of a source target and acquires noise-corrupted high-dimensional sensory data. To fill this gap, we propose a task-oriented scheme that integrates sensing and over-the-air computation (AirComp) for wide-view sensing based edge-device co-inference systems.

In this paper, a multi-device based ISCC system is considered to support edge-device co-inference tasks in many application scenarios such as ensuring security and reducing energy consumption in smart home (see, \cite{yar2021towards}), autonomous driving (see, \cite{cheng2022integrated}) and traffic monitoring in Vehicle-to-Everything (V2X) (see, \cite{du2023towards}). Each device is equipped with a single antenna and a dual-functional-radar-communication (DFRC) transceiver used both for sensing and communication. First, all devices transmit a frequency modulation continuous wave (FMCW) signal in an orthogonal frequency band to sense the same wide view of the source target for obtaining homogeneous sensory raw data. Then, a singular value decomposition (SVD) based linear filter is adopted for clutter cancellation and a principal component analysis (PCA) based extractor is exploited for extracting a low-dimensional local feature vector at each device. For further suppressing the sensing noise power and enhancing the communication efficiency, all local feature vectors are aggregated at the edge server via the technique of AirComp. Specifically, AirComp allows all devices simultaneously to transmit the same dimension of all local feature vectors over the same frequency band, leading to a significant enhancement of communication efficiency (see, \cite{wang2022over,zhu2018mimo,wen2019reduced,zhu2020one}). By leveraging the waveform superposition property, a weighted sum of all local feature elements is directly calculated instead of decoding the value of each one individually. This work jointly considers the aggregation of all elements over an orthogonal frequency division multiplexing (OFDM) based broadband channel. Based on the novel design criterion called maximum minimum pair-wise discriminant gain, we propose the joint sensing power assignment, transmit precoding and receive beamforming problem. The challenges to solving this problem arise from three aspects: the coupling between sensing and AirComp, the joint optimization of all feature elements and the complicated form of the maximum minimum pair-wise discriminant gain. To address this problem, we propose the task-oriented ISCC scheme with AirComp. The detailed contributions of this work are summarized as follows.
\begin{itemize}
    \item {\bf Novel Design Metric of Maximum Minimum Pair-Wise Discriminant Gain}: To overcome the limitation of unbalanced and low inference accuracy resulting from the existing metric of average pair-wise discriminant gain (see, \cite{lan2022progressive, wen2022task1, wen2022task2}), we adopt a novel design criterion called maximum minimum pair-wise discriminant gain in this work. It maximizes the discriminant gain between the closest class pair. Consequently, the least distinguishable class pair can be well separated in the feature space. This leads to a balanced and enhanced achievable inference accuracy. 
    
    \item {\bf AirComp based ISCC Framework for Edge-Device Co-Inference}: An AirComp based ISCC framework is established to complete edge-device co-inference tasks. The modules of sensing (including sensing waveform design and SVD based clutter cancellation), on-device computation (i.e., PCA based feature extraction) and AirComp (local feature vectors aggregation) are efficiently constructed. Particularly, an OFDM based broadband channel is used for the aggregation of all local feature vectors. Over an arbitrary frequency subcarrier, the same dimension of all local feature vectors is aggregated. The aggregation of different dimensions is over different subcarriers. The influences of each module on the design metric, i.e., minimum pair-wise discriminant gain, are mathematically characterized in closed-form expressions.  
 
    \item {\bf Task-Oriented ISCC Scheme with AirComp}: Under the criterion of maximum minimum pair-wise discriminant gain, we formulate the problem of joint sensing power assignment, transmit precoding and receive beamforming. We then propose the task-oriented ISCC scheme to address this problem, which first conducts variables transformation to derive an equivalent problem with a difference-of-convex (d.c.) form and then solves the d.c. problem based on the typical method of successive convex approximation (SCA) \cite{razaviyayn2014successive}. Compared with the existing AirComp based scheme in \cite{wen2022task2}, where the optimization of different feature elements is separately designed and the sensing stage is not considered, the sensing, on-device computation and AirComp of all feature elements are jointly optimized in our proposed scheme. This provides two extra degrees of freedom to enhance the inference performance. On one hand, the system is optimized from a systematic view that coordinates the design of sensing, computation and communication by fully considering their coupling mechanism and competence in inference tasks. On the other hand, the joint design of all feature dimensions allows adaptive resource allocation among different feature dimensions, i.e., more resources can be assigned to the more important feature dimensions of the inference task. 
    
    \item {\bf Performance Evaluation}:
    Extensive experiments are performed to evaluate our proposed framework and algorithm based on the wireless sensing simulator proposed in \cite{li2021wireless}. A wide-view human motion recognition task is considered with two inference models: a multi-layer perception (MLP) neural network and a support vector machine (SVM) model. To begin with, the inference accuracy is shown to be monotonically increasing with the maximum minimum pair-wise discriminant gain, which verifies the efficiency of the adopted design criterion. Then, the proposed scheme is shown to outperform the state-of-the-art scheme and a baseline scheme. 
\end{itemize}


\section{System Model and Problem Formulation}\label{sec:2}
\subsection{Network Model}
Consider a single network to support edge-device co-inference tasks, as shown in Fig.~\ref{fig:system}. There is one edge server equipped with an $N_r$-antenna access point (AP) and $K$ edge devices, each of which is equipped with a dual-functional-radar-communication (DFRC) system. Many types of radar are used for sensing in different scenarios including pulsed radar, continuous-wave radar, OFDM radar, OTFS radar, FMCW radar, etc \cite{zhang2022enabling}. Pulsed radar and continuous-wave radar are low-efficiency due to the avoidance of self-interference. The OFDM radar and OTFS radar suffer from co-channel interference from the communication systems \cite{carv2020compa,zhang2022enabling}. In the FMCW radar adopted in this paper, a dedicated frequency band is utilized for sensing and the frequency of the sensing signal is modulated as a linear function of time. As a result, there is no co-channel interference and self-interference\cite{carv2020compa,zhang2022enabling}. The workflow to complete an edge inference task is shown in Fig.~\ref{fig:time}. All devices perceive the same wide view of a source target and obtains homogeneous sensory data, from which the local feature vectors are extracted. The dimension of each local feature vector is denoted as $M$. The sensing frequency bands of different devices are orthogonal. Then, all local feature vectors are aggregated to derive a denoised global feature vector at the edge server using the technique of AirComp. Finally, the global feature vector is input into a server-side AI model to complete the whole inference task. 
\begin{figure*}[!t]
    \centering
    \includegraphics[width=0.9\linewidth]{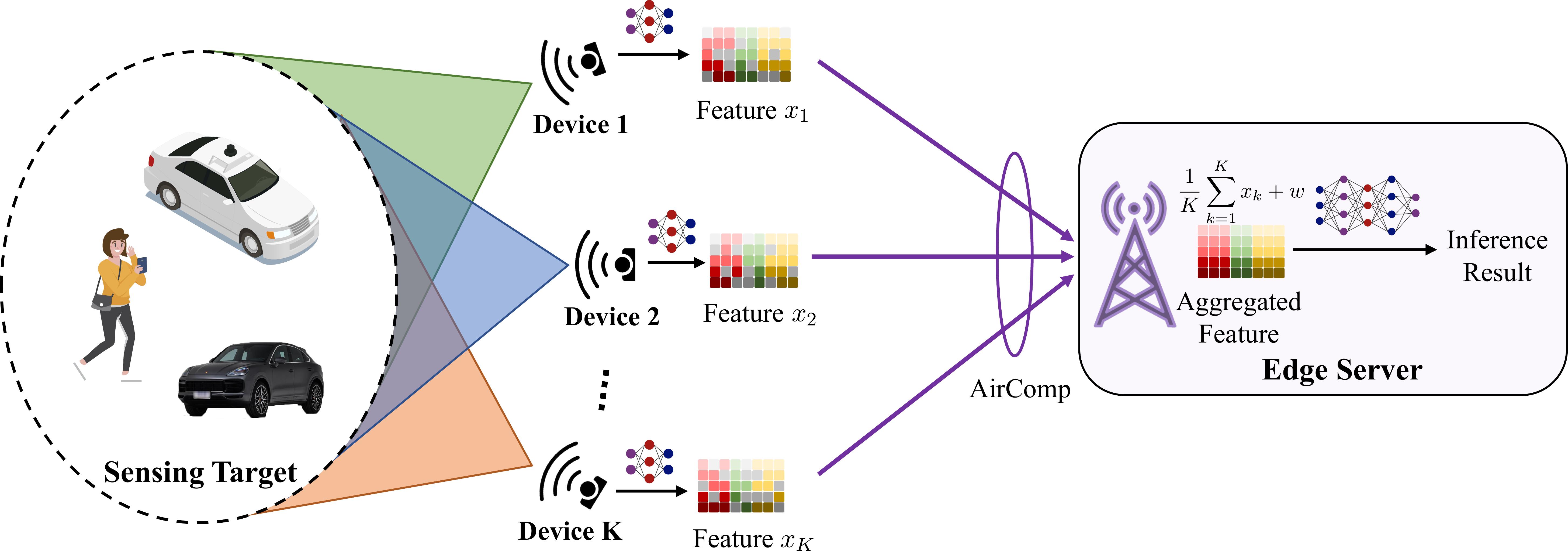}
    \caption{The system architecture of proposed ISCC framework.}
    \label{fig:system}
\end{figure*}

The sensing, computation and AirComp processes operate sequentially at all devices, as shown in Fig.~\ref{fig:time}. Particularly, to aggregate all feature elements using AirComp, OFDM is leveraged. $M$ frequency subcarriers are used to aggregate all the $M$ dimensions of the local feature vectors. Over each subcarrier, an element of the same feature dimension is transmitted by all devices and is aggregated at the edge server to get a global denoised one. As the time length of transmitting one feature element is much shorter than the channel coherence-time duration \cite{Zhu2020TWC}, static channels are assumed during one time slot. The edge server serves as a central coordinator and has the ability to acquire the channel state information (CSI) of all involved links. 
\begin{figure*}[!t]
    \centering
    \includegraphics[width=\linewidth]{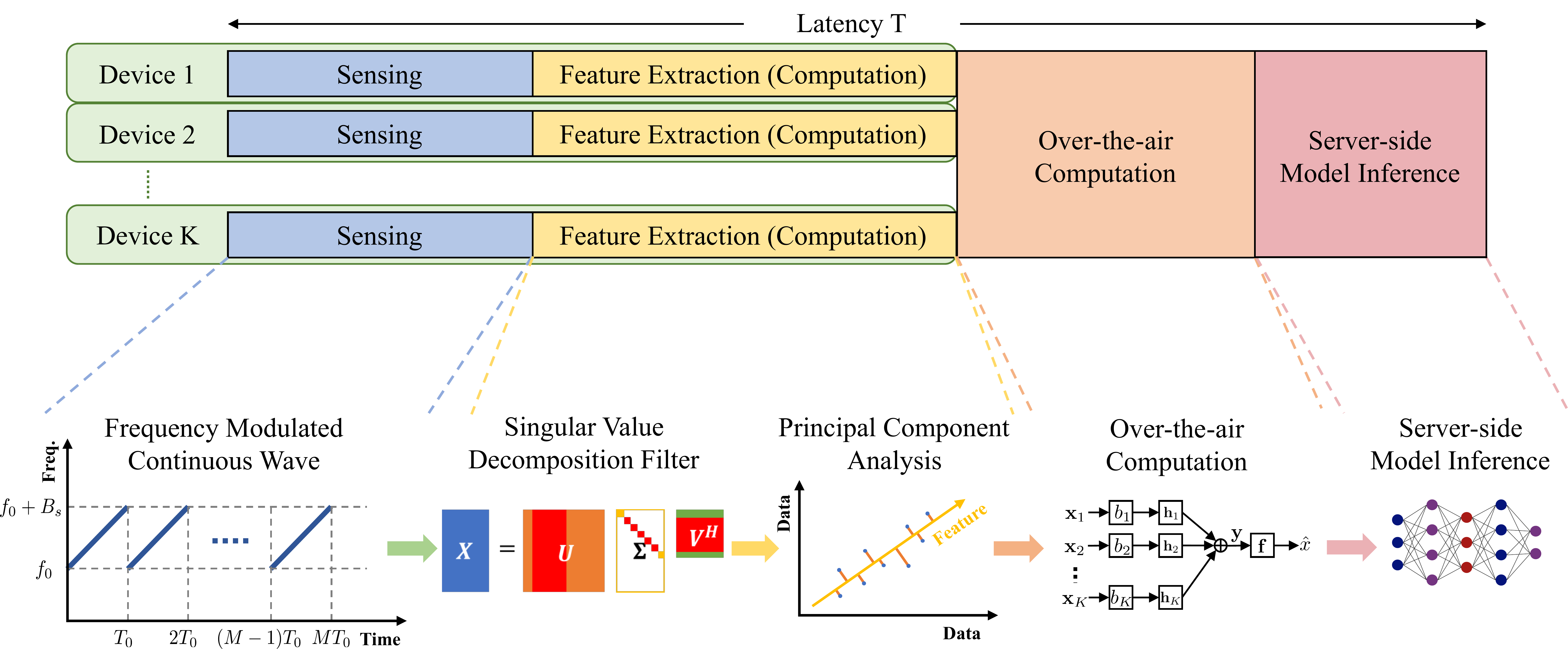}
    \caption{The workflow for completing an inference task.}
    \label{fig:time}
\end{figure*}

\subsection{Sensing Signal Processing and Feature Extraction}
We adopt the models of sensing signal processing and feature extraction proposed in \cite{wen2022task1}. As shown in Fig.~\ref{fig:time}, during the radar sensing stage, each device 
transmits the FMCW signal of $N$ up-ramp chirps for sensing. Each chirp has a time duration of $T_0=T_s/N$ with $T_s$ being the total sensing time. For device $k$, the sensing signal of one chirp is formulated as 
\begin{equation}
    c_{s,k}(t) = \operatorname{rect}\left(\frac{t}{T_0}\right) \cdot \cos \left(2 \pi f_{k,0} t+2\pi \frac{B_s}{T_0} t^{2}\right),\; 1\leq k \leq K,
\end{equation}
where $\operatorname{rect}\left(\cdot\right)$ is the rectangular pulse function with amplitude 1 and pulse length 1 centered at $t=0$, $f_{k,0}$ is the starting frequency of sensing signal, $B_s$ is the bandwidth of the sensing signal. It follows that the signal of the whole sensing duration is  
\begin{equation}
    s_k(t) = \sum_{n=0}^{N-1} c_{s,k}(t-n T_0),
\end{equation}
\begin{equation}
\begin{aligned}
    s_k(t) = \sum_{n=0}^{N-1} &\operatorname{rect}\left(\frac{t-n T_0}{T_0}\right) \\ 
    \times &\cos \left(2 \pi f_0(t-n T_0)+2\pi \frac{B_s}{T_0} (t-n T_0)^{2}\right).
\end{aligned}
\end{equation}
Then the reflected signals from the direct and indirect paths are received by each device. The desirable echo signal is the one directly reflected from the target, given by
\begin{equation}
    u_k(t)=H_{s,k}(t)s_k(t-\tau),\; 1\leq k \leq K,
\end{equation}
where $H_{s,k}(t)$ is the reflection matrix of the target including the round-trip path-loss, $\tau$ is the round-trip delay. The echo signal indirectly reflected through the $j$-th indirect path is 
\begin{equation}
    v_{k,j}(t) = C_{r,k,j}(t)s_k(t-\tau_j),\; 1\leq k \leq K,
\end{equation}
where $C_{s,k,j}(t)$ is the round-trip coefficient of path $j$, $\tau_j$ is the delay of the $j$-th path. Note that $H_{s,k}(t)$ and $\{ C_{s,k,j}(t) \}$ can be pre-estimated by each device and fed back to the edge server before the inference task. Thereby, the received signal of ISAC device $k$ is given by
\begin{equation}\label{eq:r}
    r_{k}(t)=u_{k}(t)+\sum_{j=1}^{J} v_{k, j}(t)+n_{r}(t),\; 1\leq k \leq K,
\end{equation}
where $u_{k}(t)$ is the desired signal for completing the inference task, $\sum_{j=1}^{J} v_{k, j}(t)$ is the clutter of $J$ indirect reflection paths and $n_r(t)$ is the white Gaussian noise. In \eqref{eq:r}, the useful signal $u_{k}(t)$ is polluted by the additive sensing clutter and noise. In the sequel, the clutter cancellation procedure is introduced. 

\subsubsection{Clutter cancellation}
First, the received signal of device $k$ is sampled at a frequency of $f_s$ into a complex feature vector $\rb_{k}\in\Cbb^{NT_0f_s}$. The data sample vector $\mathbf{r}_k$ contains both the ranging and velocity information of the target. Thus, for deriving the information of sensing target, $\mathbf{r}_k$ is transformed into a complex matrix $\mathbf{R}_{k}\in\mathbb{C}^{T_0f_s\times N}$, the column dimension of which is usually used for ranging and the row dimension contains the feature in the Doppler spectrum shift. Each column of $\mathbf{R}_k$ represents the data samples in one chirp containing the distance information of the target and each row of $\mathbf{R}_k$ reflects the motion of the target among different chirps, where the velocity of the target can be extracted from the Doppler shift.
Then, the SVD based linear filter proposed in \cite{cortes1995support} is utilized for clutter cancellation. To be specific, the SVD of $\Rb_{k}$ is 
\begin{equation}
    \Rb_{k} = \Ub\Sigmab\Vb^H = \sum_{i=1}^{I}\ub_i\sigma_i\vb_i^H,\; 1\leq k \leq K,
\end{equation}
where $I=\min\{T_0f_s, N\}$, $\ub_i$, $\sigma_i$ and $\vb_i$ are the $i$-th left singular vector, singular value and right singular vector of $\Rb_{k}$, respectively, $\Vb^H$ is the conjugate transpose of $\Vb$. Clutter cancellation is performed by deleting the principal and least dimensions of $\Rb_{k}$. As a result, the data matrix after filtering is
\begin{equation}
    \tilde{\Rb}_{k} = \sum_{i=r_1}^{r_2}\ub_i\sigma_i\vb_i^H,\; 1\leq k \leq K,
\end{equation}
where $1\leq r_1 $ and $ r_2\leq I$ are empirical parameters with respect to different kinds of radar sensors. Since only the information in row dimension, i.e., the Doppler spectrum shift, is needed for the inference task, $\tilde{\Rb}_{k}$ is compressed into a vector $\bar{\rb}_{k}\in\Cbb^{N}$. Its $i$-th element is given by
\begin{equation}
    \bar{r}_{k}^i = \sum_{j=1}^{T_0f_s}\tilde{R}_{k}^{j,i},\; 1\leq k \leq K,
\end{equation}
where $\tilde{R}_{k}^{j,i}$ is the $(j,i)$-th element of matrix $\tilde{\Rb}_{k}$. Then the real part and the imaginary part of $\bar{r}_k$ is cascaded into a real vector $\tilde{\rb}_{k}\in\Rbb^{2N}$
\begin{equation}
    \tilde{\rb}_{k} = [\mathfrak{Re}(\bar{\rb}_{k}), \mathfrak{Im}(\bar{\rb}_{k})]
\end{equation}

\subsubsection{Feature extraction}
 Following \cite{lan2022progressive, wen2022task1, wen2022task2}, the PCA based linear extractor is used to extract the local feature vector from clutter-cancelled sensory data $\tilde{\rb}_{k}\in\Rbb^{2N}$. The PCA is performed at the edge server before the inference task using the training dataset. Then, the template of the $M$ principal eigen-subspace is broadcast to all devices for extracting the local feature vectors $\{\tilde{\rb}_k \in \Rbb^M\}$ with $M$ being the number of extracted feature elements. Since the clutter cancellation and feature extraction processes are linear and based on \eqref{eq:r}, the $m$-th feature element of $\tilde{\rb}_k$ is given by 
\begin{equation}
    \tilde{r}_k(m) = \tilde{u}_k(m)+\sum_{j=1}^J \tilde{v}_{k,j}(m)+n_r(m), 
\end{equation}
where $\tilde{u}_k(m)$ is the ground-truth of feature $m$, $\tilde{v}_{k,j}(m)$ is the clutter from path $j$, $n_r(m)$ is the noise in Gaussian distribution, defined by
\begin{equation}\label{dist:noise}
    n_r(m)\sim\Ncal\left(0,\sigma_r^2\right),\; 1\leq m \leq M.
\end{equation}
Next, each feature element of device $k$ is normalized by its sensing power $P_{s, k}$ and the normalized feature element $m$ is given by
\begin{equation}\label{def:x}
    x_k(m)=\frac{\tilde{r}_{k}(m)}{\sqrt{P_{s, k}}}=x(m)+\tilde{c}_{s, k}(m)+\frac{n_r(m)}{\sqrt{P_{s, k}}},
\end{equation}
where $x(m)=\tilde{u}_k(m) / \sqrt{P_{s, k}}$ is the normalized ground-truth feature and 
\begin{equation}
    \tilde{c}_{s, k}(m)=\sum_{j=1}^{J} \frac{\tilde{v}_{k, j}(m)}{\sqrt{P_{s, k}}}, \; 1\leq m \leq M, \; 1\leq k \leq K,
\end{equation}
is the normalized clutter. Since clutter is rich scattering and its number of paths $J$ is very large, these individual clutter elements are assumed to be independent and identically distributed with finite variance. Thus $\tilde{c}_{s,k}(m)$ follows a Gaussian distribution according to the Central Limit Theorem (CLT), given by 
\begin{equation}\label{dist:clutter}
    \tilde{c}_{s, k}(m)\sim\mathcal{N}\left(\mu_{s,k},\sigma_{s, k}^2\right), \; 1\leq m \leq M, \; 1\leq k \leq K, 
\end{equation}
where $\mu_{s,k}$ is the mean of clutter and can be pre-estimated and $\sigma_{s, k}^2$ is the clutter variance. Then the pre-estimated mean of $\tilde{c}_{s,k}(m)$ is eliminated to derive a zero-mean residual clutter element $c_{s, k}(m) = \tilde{c}_{s,k}(m) - \mu_{s,k}$. The CLT states that the sum or mean of a large number of independent and identically distributed random variables will approximate a Gaussian distribution, regardless of the shape of the original distribution, as long as the original variables have finite variance. Thereby, the local feature vector of device $k$ can be written as
\begin{equation}
    \xb_k = \xb + {\bf c}_{s,k} + \dfrac{\nb_r}{ \sqrt{P_{s,k}}}, \; 1\leq k \leq K, 
\end{equation}
where $\xb = \{x(m)\}_{m=1}^M$, ${\bf c}_{s,k}(m) = \{c_{s, k}\}_{m=1}^M$ and $\nb_r = \{ n_r(m)\}_{m=1}^M$.

\subsection{Feature Distribution}\label{Sect:FeatureDistribution}
Consider a classification task with $L$ classes. Following \cite{lan2022progressive,wen2022task1,wen2022task2}, the ground-truth feature vector $\xb$ is assumed to follow a Gaussian mixture distribution. Since PCA is performed, different elements of ground-truth feature vector are independent. Consider an arbitrary element $x(m)$, its distribution is given as
\begin{equation}\label{def:GTx}
    f(x(m)) = \frac{1}{L}\sum_{\ell=1}^L f_\ell(x(m)), \; 1\leq m \leq M,
\end{equation}
where $f_\ell(x(m)) = \Ncal\left(\mu_{\ell,m},\sigma_m^2\right)$ is the probability density function of the Gaussian component corresponding to the $\ell$-th class, $\mu_{\ell,m}$ is the centroid of class $\ell$ and $\sigma_m^2$ is the variance. These parameters are pre-estimated using the training dataset. Based on \eqref{def:GTx} and the clutter distribution in \eqref{dist:clutter} and the noise distribution in \eqref{dist:noise}, the distribution of the local feature element $x_k(m)$ can be derived as in the following lemma.
\begin{lemma}\label{lemma:1}
The distribution of local feature elements $x_k(m)$ can be derived as
\begin{equation}\label{dist:xtrans}
    x_k(m) \sim \frac{1}{L}\sum_{\ell=1}^L\Ncal\left(\mu_{\ell,m}, \sigma_m^2+\sigma_{s,k}^2+\frac{\sigma_r^2}{P_{s, k}}\right),~ 1\leq k\leq K.
\end{equation}
\end{lemma}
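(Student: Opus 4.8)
The plan is to compute the distribution of $x_k(m)$ directly from its defining decomposition in \eqref{def:x}, namely $x_k(m) = x(m) + c_{s,k}(m) + n_r(m)/\sqrt{P_{s,k}}$, where we have already absorbed the pre-estimated mean $\mu_{s,k}$ into the ground-truth term so that the residual clutter $c_{s,k}(m)$ is zero-mean Gaussian. The key observation is that this is a sum of three mutually independent random variables: $x(m)$ follows the Gaussian mixture \eqref{def:GTx}, $c_{s,k}(m) \sim \Ncal(0, \sigma_{s,k}^2)$ by \eqref{dist:clutter} after mean removal, and $n_r(m) \sim \Ncal(0,\sigma_r^2)$ by \eqref{dist:noise}, so that $n_r(m)/\sqrt{P_{s,k}} \sim \Ncal(0, \sigma_r^2/P_{s,k})$. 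Independence follows because the clutter arises from distinct indirect scattering paths while $n_r$ is the receiver thermal noise and $x(m)$ is the target's intrinsic feature.

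First I would condition on the class label $\ell$, which occurs with probability $1/L$. Conditioned on class $\ell$, the ground-truth component is $x(m) \mid \ell \sim \Ncal(\mu_{\ell,m}, \sigma_m^2)$. Adding the two independent zero-mean Gaussians $c_{s,k}(m)$ and $n_r(m)/\sqrt{P_{s,k}}$ and using the standard fact that the sum of independent Gaussians is Gaussian with added means and added variances, I get
\begin{equation*}
    x_k(m) \mid \ell \sim \Ncal\!\left(\mu_{\ell,m},\; \sigma_m^2 + \sigma_{s,k}^2 + \frac{\sigma_r^2}{P_{s,k}}\right).
\end{equation*}
Then I would un-condition: since the mixing weights are untouched by the additive noise (the noise does not depend on $\ell$), the law of total probability gives $f(x_k(m)) = \frac{1}{L}\sum_{\ell=1}^L \Ncal(\mu_{\ell,m}, \sigma_m^2 + \sigma_{s,k}^2 + \sigma_r^2/P_{s,k})$, which is exactly \eqref{dist:xtrans}. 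The same argument holds for every $m$ and every $k$, completing the proof.

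This argument is essentially routine; the only subtlety — and the one place I would be careful — is the independence and Gaussianity of $c_{s,k}(m)$. Its Gaussianity is asymptotic, justified by the CLT over the large number $J$ of scattering paths as discussed preceding \eqref{dist:clutter}, so strictly speaking \eqref{dist:xtrans} is an approximation that becomes exact in the rich-scattering limit; I would state that the result inherits this modeling assumption rather than re-deriving it. Likewise I would note that the normalization by $\sqrt{P_{s,k}}$ is a deterministic scaling that merely rescales the noise variance, and that because PCA decorrelates the feature coordinates, the per-element analysis suffices and no cross-dimensional covariance terms appear. No genuine obstacle arises beyond bookkeeping of means and variances.
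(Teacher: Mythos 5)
Your proposal is correct and follows essentially the same route as the paper's Appendix~A proof: establish the per-class conditional distribution $\Ncal\left(\mu_{\ell,m},\,\sigma_m^2+\sigma_{s,k}^2+\sigma_r^2/P_{s,k}\right)$ by adding the independent zero-mean clutter and scaled sensing noise to the class-$\ell$ Gaussian component, then recombine with the uniform weights $1/L$. The only difference is presentational: the paper writes $x(m)$ as an average of $L$ per-class variables $x_\ell(m)$ and works component-wise, whereas you condition on the class label and invoke the law of total probability, which is the cleaner formalization of the same argument.
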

\begin{proof} 
    See Appendix~\ref{appendix:1}.
\end{proof}

\subsection{Broadband Over-the-air Computation}
In the edge-device co-inference system shown in Fig.~\ref{fig:system}. The edge server needs to aggregate all local feature vectors to obtain a global denoised one. If the conventional orthogonal multiple access technique such as TDMA is used, the consumed resource blocks linearly increase with the number of devices, leading to heavy communication overhead. To address this communication bottleneck, the technique of AirComp (see \cite{wang2022over,zhu2018mimo, wen2019reduced, zhu2020one}) is adopted for the feature vector aggregation. As shown in Fig.~\ref{fig:signal}, over the same subcarrier, it allows all devices simultaneously transmit the same feature dimension. At the server, the waveform superposition property is leveraged to directly derive a weighted sum of the elements from all devices. As a result, the communication overhead remains unchanged as the number of devices varies, leading to a significant enhancement of communication efficiency. 
\begin{figure}[t]
    \centering
    \includegraphics[width=\columnwidth]{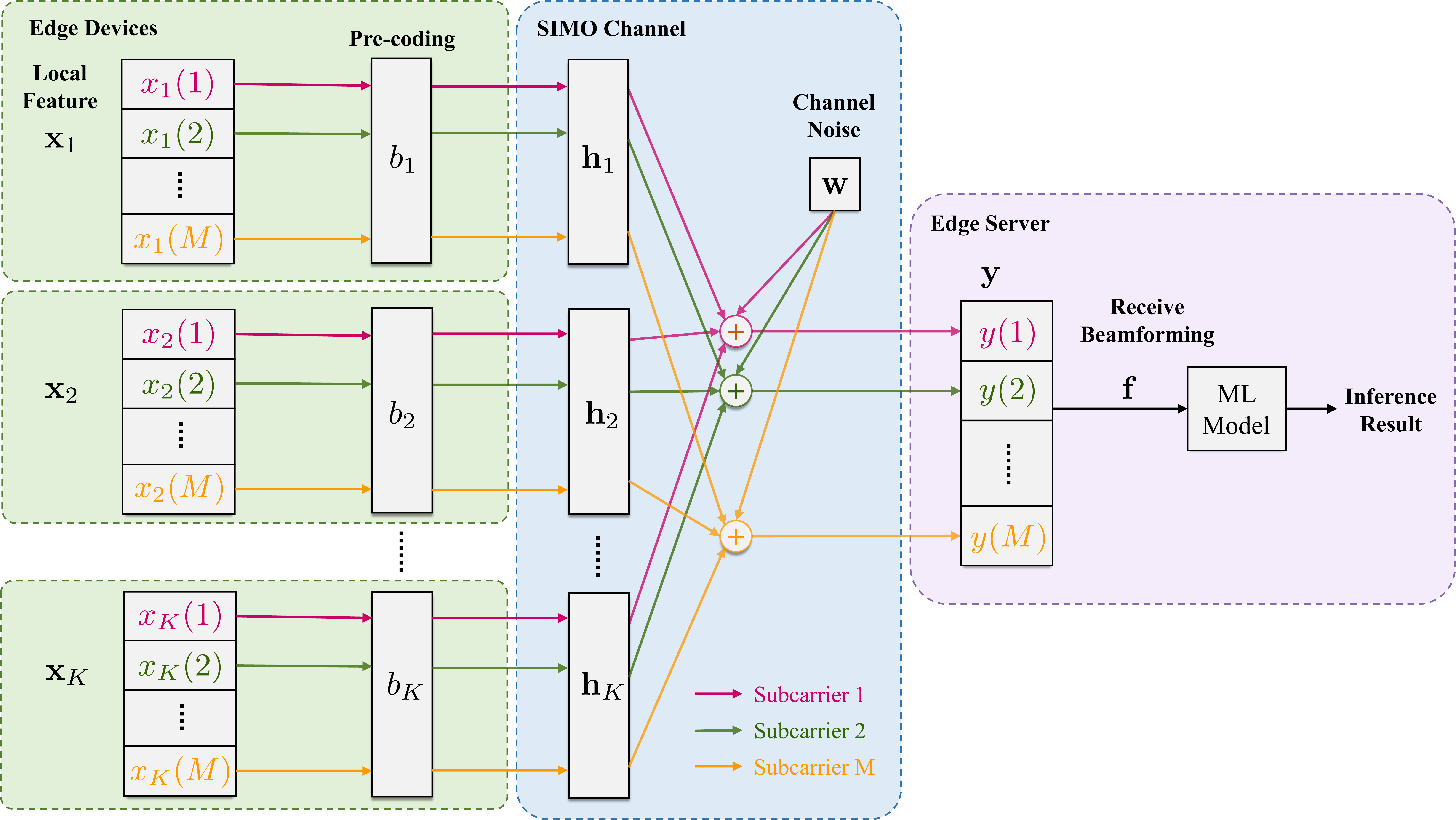}
    \caption{The signal diagram of over-the-air computation with OFDM.}
    \label{fig:signal}
\end{figure}

Specifically, consider an arbitrary subcarrier to aggregate an arbitrary feature dimension $m$. At each device, the local feature element $x_k(m)$ is first pre-coded with $b_{k,m}$ and then transmitted over the single-input-multiple-output (SIMO) channel, the aggregated received signal at the server is given by 
\begin{equation}\label{eq:channel}
\yb(m)=\sum_{k=1}^K\hb_{k,m}b_{k,m}x_k(m)+\wb(m),
\end{equation}
where $\hb_{k,m}\in \Cbb^{N_r}$ is the channel gain of device $k$, $b_{k,m}$ is the pre-coding complex scalar of $x_k(m)$, $\wb(m)$ is the additive white Gaussian noise following the distribution of $\Ncal(\mathbf{0},N_0\Ib)$ and $N_0$ is the channel noise variance, $\Ib\in\Rbb^{N_r\times N_r}$ is the identity matrix. As mentioned, the channel vector $\hb_{k,m}$ remains static for aggregating all feature elements. After receiving the signal, a receive beamforming vector $\fb_m\in\Cbb^{N_r}$ is added by the edge server to extract the feature vector
\begin{equation}\label{Eq:ReceivedElement}
    \hat{x}(m)=\fb_m^H\yb(m)=\fb_m^H\sum_{k=1}^K\hb_{k,m}b_{k,m}x_k(m)+\fb_m^H\wb(m).
\end{equation}
For similar reasons as \eqref{dist:xtrans}, the distribution of $\hat{x}(m)$ can be further derived as 
\begin{equation}\label{Eq:ReceivedElementDistribution}
    f\left(\hat{x}(m) \right) = \dfrac{1}{L} \sum\limits_{\ell=1}^L f_{\ell}(\hat{x}(m)), \; 1\leq m \leq M,
\end{equation}
where 
\begin{equation}
    f_{\ell}(\hat{x}(m)) = \mathbf{f}_m^H\sum_{k=1}^K\mathbf{h}_{k,m}b_{k,m} f_{\ell}(x_k(m)) + \mathbf{f}_m^H f(\mathbf{w}(m)),
\end{equation}
and $f_{\ell}(x_k(m)) = \mathcal{N}\left(\mu_{\ell,m}, \sigma_m^2 + \sigma_{s,k}^2 + \frac{\sigma_r^2}{P_{s, k}}\right)$ and $f(\mathbf{w}(m)) = \mathcal{N}(\mathbf{0},N_0\mathbf{I})$ are the distributions of the $\ell$-th component of local feature in device $k$ and the Gaussian white noise in wireless channel.

Then, all dimensions of the local feature vectors are aggregated in a similar way over $M$ subcarriers, as shown in Fig. \ref{fig:signal}. Thereby, the overall received feature vector is $\hat{\bf x}= [ \hat{x}(1),..., \hat{x}(m), ..., \hat{x}(M)]^T$. Since PCA is performed at each device, different elements of each local feature vector are independent. As a result, the distributions of different elements in the received feature vector $\hat{\xb}$ are independent, since each feature element $\hat{x}(m)$ only depends on the corresponding local feature elements $\{x_k(m)\}$ and the white Gaussian channel noise according to \eqref{Eq:ReceivedElement}.



\section{Problem Formulation and Simplification}\label{sec:3}
In this section, a novel design criterion called minimum pair-wise discriminant gain is adopted, based on which, the problem is formulated.

\subsection{Minimum Pair-Wise Discriminant Gain}

As mentioned, the design criterion adopted in this work is maximum inference accuracy instead of the conventional minimum mean square error (MMSE), as the latter cannot distinguish the importance levels of different elements to the inference task \cite{wen2022task2}. However, the instantaneous inference accuracy is unknown and does not have a mathematical model at the design stage. To this end, an approximate but tractable metric called discriminant gain is adopted as an alternative. Based on the received feature distribution in \eqref{Eq:ReceivedElementDistribution}, a pair-wise discriminant gain of an arbitrary class pair $(\ell, \ell^{'})$ is defined as the symmetric KL divergence of their corresponding Gaussian components \cite{kullback1951information, lan2022progressive}. Specifically, considering the $m$-th feature element, its pair-wise discriminant gain in terms of the class pair $(\ell, \ell^{'})$ is given by 
\begin{equation}
\begin{aligned}
    G_{\ell, \ell^{\prime}}(\hat{x}(m))\triangleq &D_{KL}\left[f_\ell\left(\hat{x}(m)\right) \| f_{\ell^\prime}\left(\hat{x}(m)\right)\right]\\
    &+ D_{KL}\left[f_{\ell^\prime}\left(\hat{x}(m)\right) \| f_\ell\left(\hat{x}(m)\right)\right],\\
    =&\int_{\hat{x}(m)} \left[f_{\ell}\left(\hat{x}(m)\right) \log \left[\frac{f_{\ell}\left(\hat{x}(m)\right)}{f_{\ell^\prime}\left(\hat{x}(m)\right)}\right]\right.\\
    &\left.+ f_{\ell^{\prime}}\left(\hat{x}(m)\right) \log \left[\frac{f_{\ell^\prime}\left(\hat{x}(m)\right)}{f_{\ell}\left(\hat{x}(m)\right)}\right]\right] \mathrm{d} \hat{x}(m),\\
\end{aligned}
\end{equation}
where $D_{KL}\left[\mathrm{p}\|\mathrm{q}\right]$ represents the KL divergence between distributions $\mathrm{p}$ and $\mathrm{q}$. As mentioned, different feature elements in the received feature vector $\hat{\xb}$ are independent. It follows that the pair-wise discriminant gain of $\hat{\xb}$ is derived as
\begin{equation}
\begin{aligned}
    G_{\ell, \ell^{\prime}}(\hat{\mathbf{x}})&=D_{K L}\left[f_{\ell}(\hat{\mathbf{x}}) \| f_{\ell^{\prime}}(\hat{\mathbf{x}})\right] + D_{K L}\left[f_{\ell^{\prime}}(\hat{\mathbf{x}}) \| f_{\ell}(\hat{\mathbf{x}})\right]\\
    &=\sum_{m=1}^{M} G_{\ell, \ell^{\prime}}\left(\hat{x}(m)\right),\;\forall (\ell,\ell^{'}).
\end{aligned}
\end{equation}
With a larger pair-wise discriminant gain, the corresponding pair of classes are better separated in the feature space, thus resulting in an improved achievable inference accuracy. 

In existing literatures \cite{lan2022progressive, wen2022task1, wen2022task2}, maximizing the average of all pair-wise discriminant gains as defined in \eqref{Eq:AverageDG} is used as the design criterion, i.e., 
\begin{equation}\label{Eq:AverageDG}
    G\left(\hat{\xb}\right)=\frac{2}{L(L-1)} \sum_{\ell^{\prime}=1}^{L} \sum_{\ell<\ell^{\prime}}G_{\ell, \ell^{\prime}}(\hat{\xb}).
\end{equation}
However, under this design goal, the values of one or several pair-wise discriminant gains can be dominant, while other pair-wise discriminant gains are very small. That says, only a subset of class pairs is well separated but the others cannot be differentiated [see Fig.~\ref{figgr:DG}(a) for example]. This leads to an unbalanced and low inference accuracy. To overcome this limitation, this work proposes to maximize the minimum pair-wise discriminant gain of all pairs, defined as
\begin{equation}\label{Eq:MimimumDG}
\begin{aligned}
    G_{\min}\left(\hat{\xb}\right)&=\min_{1 \leq \ell \neq \ell^{\prime} \leq L} G_{\ell, \ell^{\prime}}(\hat{\xb})\\
    &=\min_{1 \leq \ell \neq \ell^{\prime} \leq L} \sum_{m=1}^{M} G_{\ell, \ell^{\prime}}\left(\hat{x}(m)\right),\;\forall (\ell,\ell^{'}).
\end{aligned}
\end{equation}
By maximizing the minimum pair-wise discriminant gain in \eqref{Eq:MimimumDG}, the closest class pair in the feature space can be well separated, leading to a balanced and enhanced inference accuracy.


\subsection{Problem Formulation}
The maximization of the minimum pair-wise discriminant gain defined in \eqref{Eq:MimimumDG} is constrained by the energy threshold of each device. Consider an arbitrary device $k$, its
sensing energy consumption is $P_{s, k}T_{s, k}$ with $P_{s,k}$ being the sensing power and $T_{s, k}$ being the fixed sensing time. Its energy consumption for on-device feature extraction is denoted as $E_{p,k}$, which is a constant. For AirComp, the power of device $k$ to transmit the $m$-th feature element is 
\begin{equation}\label{def:comm_power}
     P_{c,k}(m) = b_{k,m}\Ebb\left[x_k(m) x_k(m)^H\right]b_{k,m}^H,\; \forall (m, k).
\end{equation}
In \eqref{def:comm_power}, since the distribution of $x_k(m)$ is known [Please refer to \eqref{def:GTx}], its variance is determined and is denoted as $X_{k}(m) = \Ebb\left[x_k(m) x_k(m)^H\right]$. It follows that the energy consumption of the whole AirComp process is 
\begin{equation}
    E_{c,k} = T_c P_{c,k}(m) = T_c\sum_{m=1}^M b_{k,m}b_{k,m}^H X_{k}(m),\; 1\leq k \leq K,
\end{equation}
where $T_c$ is the AirComp transmission time for each element. Therefore, the energy consumption constraint of device $k$ can be derived as 
\begin{equation}\label{cons:energy1}
    P_{s, k}T_{s, k}+E_{p, k}+ T_c \sum_{m=1}^M b_{k,m}b_{k,m}^H X_{k}(m) \leq E_{k},\; 1 \leq k \leq K,
\end{equation}
where $E_{k}$ is the energy threshold of device $k$.

Accordingly, the problem of maximizing the minimum pair-wise discriminant gain under the energy consumption constraint can be formulated as
\begin{equation}\Prob{1}
\begin{aligned}
    &\max_{\substack{\{P_{s, k}\},\\\{b_{k,m}\},\{\fb_m\}}}&&\!\!\!\!\min_{1 \leq \ell \neq \ell^{\prime} \leq L}\;\; \sum_{m=1}^{M} G_{\ell, \ell^{\prime}}\left(\hat{x}(m)\right),\\
    &\qquad\text{s.t.}&&\!\!\!\!P_{s, k}T_{s, k}+E_{p, k}+ T_c \sum_{m=1}^M b_{k,m}b_{k,m}^H X_{k}(m)\\
    &&&\qquad\qquad\qquad\qquad\quad\leq E_{k},\; 1 \leq k \leq K.
\end{aligned}
\end{equation}

\subsection{Problem Simplification}
Since the distributions of the received elements $\{\hat{x}(m)\}$ in \eqref{Eq:ReceivedElementDistribution} are complex, the minimum pair-wise discriminant gain defined based on these distributions, i.e., the objective of $\Prob{1}$ is a complicated non-convex function. Besides, the energy constraint in $\Prob{1}$ is also non-convex. To address this complicated non-convex problem, a conventional approach (see, \cite{zhu2018mimo, wen2019reduced, wen2022task2}) is applied to simplify it by pre-determining the precoders as
\begin{equation}\label{def:c}
    \fb_m^H\hb_{k,m}b_{k,m} = c_{k,m},\; 1\leq m \leq M,\; 1 \leq k \leq K,
\end{equation}
where $c_{k,m} \in \Rbb^{+}$ represents the received signal power of element $m$ from device $k$. Accordingly, the precoder $b_{k,m}$ can be written in a function of $c_{k,m}$ by multiplying $(\mathbf{f}_m^H\mathbf{h}_{k,m})^H$ on both sides of equation \eqref{def:c}:
\begin{equation}
    (\mathbf{f}_m^H\mathbf{h}_{k,m})^H \mathbf{f}_m^H \mathbf{h}_{k,m} b_{k,m} = \mathbf{h}_{k,m}^H \mathbf{f}_m c_{k,m},\; \forall (m, k).
\end{equation}
Then, $b_{k,m}$ is derived as
\begin{equation}\label{Eq:Precoder}
    b_{k,m} = \frac{c_{k,m}\hb_{k,m}^H \fb_m}{\hb_{k,m}^H \fb_m \fb_m^H \hb_{k,m}} = \frac{c_{k,m}}{\fb_m^H \hb_{k,m}},\; \forall (m, k).
\end{equation}
By substituting $b_{k,m}$ in \eqref{Eq:Precoder} into the received feature element in \eqref{Eq:ReceivedElement}, we have
\begin{equation}
    \hat{x}(m)= \sum_{k=1}^K c_{k,m}x_k(m)+\fb_m^H\wb(m), \; \forall (m, k),
\end{equation}
which, by substituting the local feature elements $\{x_k(m)\}$ in \eqref{def:x}, is further derived as
\begin{equation}
\begin{aligned}
    \hat{x}(m) = &\left(\sum_{k=1}^K c_{k,m}\right) x(m) \\
    &+ \sum_{k=1}^Kc_{k,m}\left(c_{s, k}(m)+\frac{n_r(m)}{\sqrt{P_{s, k}}}\right) + \fb_m^H\wb(m).
\end{aligned}
\end{equation}
It follows that the distribution of $\hat{x}(m)$ can be derived as
\begin{equation}
\begin{aligned}
    f\left(\hat{x}(m) \right) &= \dfrac{1}{L} \sum\limits_{\ell=1}^L f_{\ell}(\hat{x}(m)) \\
    &= \dfrac{1}{L} \sum\limits_{\ell=1}^L \mathcal{N}\left(\hat{\mu}_{\ell,m}, \hat{\sigma}^2_m\right), \; 1\leq m \leq M,
\end{aligned}
\end{equation}
Since the transformations in (33) are all linear and $x_\ell(m)\sim\mathcal{N}(\mu_{\ell,m},\sigma_m^2)$, $c_{s,k}(m)\sim\mathcal{N}(0,\sigma_{s,k}^2)$ and $n_r(m)\sim\mathcal{N}(0,\sigma_r^2)$ are following independent Gaussian distributions, the distribution of $\hat{x}(m)$ can be derived in a closed form. The mean of the $\ell$-th class component is given as follows:
\begin{equation}
    \hat{\mu}_{\ell,m} = \left(\sum_{k=1}^K c_{k,m}\right) \mu_{\ell,m},
\end{equation}
and the variance of the $\ell$-th class component is given as follows:
\begin{equation}
\begin{aligned}
    \hat{\sigma}^2_m =& \left(\sum_{k=1}^K c_{k,m}\right)^2 \sigma_m^2 \\
    &+\sum_{k=1}^K c_{k,m}^2\left(\sigma_{s, k}^2+\frac{\sigma_r^2}{P_{s, k}}\right)+N_0\mathbf{f}_m^H\mathbf{f}_m.
\end{aligned}
\end{equation}
As a result, the pair-wise discriminant gain $ G_{\ell, \ell^{\prime}}\left(\hat{x}(m)\right)$ can be derived as
\begin{equation}
\begin{aligned}
    &G_{\ell, \ell^{\prime}}\left(\hat{x}(m)\right) = \frac{\left(\hat{\mu}_{\ell,m}-\hat{\mu}_{\ell^\prime,m}\right)^2}{\hat{\sigma}_m^2}= \\
    &\frac{\left(\mu_{\ell,m}-\mu_{\ell^\prime,m}\right)^2\left(\sum_{k=1}^K c_{k,m} \right)^2}{\sigma_m^2\left(\sum_{k=1}^K c_{k,m}\right)^2+\sum_{k=1}^K c_{k,m}^2\left(\sigma_{s, k}^2+\frac{\sigma_r^2}{P_{s, k}}\right)+N_0\fb_m^H\fb_m}.
\end{aligned}
\end{equation}
Besides, by substituting the precoders in \eqref{Eq:Precoder} into the energy constraint in $\Prob{1}$, it can be re-formulated as
\begin{equation}\label{cons:energy2}
    P_{s, k}T_{s, k}+E_{p, k}+ T_c \sum_{m=1}^M \frac{c_{k,m}^2 X_{k}(m)}{\hb_{k,m}^H \fb_m \fb_m^H \hb_{k,m}} \leq E_k,~ 1 \leq k \leq K.
\end{equation}

In summary, with the precoders defined in \eqref{Eq:Precoder}, $\Prob{1}$ is simplified as
\begin{equation}\Prob{2}
\begin{aligned}
    &\max_{\substack{\{P_{s, k}\},\\\{c_{k,m}\},\{\fb_m\}}}&&\!\!\!\!\min_{1 \leq \ell \neq \ell^{\prime} \leq L}\frac{\left(\hat{\mu}_{\ell,m}-\hat{\mu}_{\ell^\prime,m}\right)^2}{\hat{\sigma}_m^2},\\
    &\qquad\text{s.t.}&&\!\!\!\!P_{s, k}T_{s, k}+E_{p, k} + T_c \\
    &&&\,\times \sum_{m=1}^M\frac{c_{k,m}^2X_{k}(m)}{\hb_{k,m}^H \fb_m \fb_m^H \hb_{k,m}} \leq E_k ,~ 1\leq k\leq K.
\end{aligned}
\end{equation}
\,

\section{Joint Sensing Power Assignment, Transmit Precoding and Receive Beamforming}\label{sec:4}

Although $\Prob{2}$ has a simplified form, it is still difficult to solve due to the minimax form and the complicated non-convex fractional functions in both objective and constraints. To address this problem, in the sequel, variables transformation is conducted to decouple the minimax objective function and to derive an equivalent problem with the d.c. form, based on which, the typical method of SCA is utilized to obtain a sub-optimal solution. 

\subsection{Variables Transformation}
To begin with, the following variable is defined to decouple the minimax objective function:
\begin{equation}
    \alpha =\min_{1 \leq \ell \neq \ell^{\prime} \leq L}\;\; \sum_{m=1}^{M} G_{\ell, \ell^{\prime}}\left(\hat{x}(m)\right). 
\end{equation}
It follows that all pair-wise discriminant gains should be no less than $\alpha$:
\begin{equation}\label{Eq:DGConstraint}
    \sum_{m=1}^M \frac{\left(\hat{\mu}_{\ell,m}-\hat{\mu}_{\ell^\prime,m}\right)^2}{\hat{\sigma}_m^2} \geq \alpha,\; 1 \leq \ell \neq \ell^\prime \leq L.
\end{equation}
Accordingly, $\Prob{2}$ is equivalent to the problem that maximizes $\alpha$ under the constraints of the original energy consumption and pair-wise discriminant gains in \eqref{Eq:DGConstraint}, i.e., 
\begin{equation*}\Prob{3}
    \begin{aligned}
        &\max_{\substack{\{P_{s, k}\}, \alpha,\\ \{c_{k,m}\},\{\fb_m\}}}&& \alpha,\\
        &\quad\ \ \text{s.t.}&& P_{s, k}T_{s, k}+E_{p, k} + T_c\\
        &&&\,\times \sum_{m=1}^M\frac{c_{k,m}^2X_{k}(m)}{\hb_{k,m}^H \fb_m \fb_m^H \hb_{k,m}} \leq E_k ,~ 1\leq k\leq K,\label{cons:o1}\\
        &&& \frac{\left(\hat{\mu}_{\ell,m}-\hat{\mu}_{\ell^\prime,m}\right)^2}{\hat{\sigma}_m^2} \geq \alpha ,\ \,\qquad 1 \leq \ell \neq \ell^\prime \leq L.
    \end{aligned}
\end{equation*}
Then, to further address the non-convex ratios in the energy consumption constraint (the first constraint), the following variables are introduced:
\begin{equation}\label{def:u}
    u_{k,m} = \frac{c_{k,m}^2}{\hb_{k,m}^H \fb_m \fb_m^H \hb_{k,m}}\geq0, \; 1 \leq m \leq M.
\end{equation}
By substituting \eqref{def:u}, the energy constraint in $\Prob{3}$ for each device $k$ is equivalently decomposed into the following two constraints:
\begin{equation}\label{cons:o1a}
P_{s, k}T_{s, k}+E_{p, k}+ T_c \sum_{m=1}^M u_{k,m} X_{k}(m) \leq E_k,~1 \leq k \leq K, 
\end{equation}
and 
\begin{equation}\label{cons:o1b}
c_{k,m}^2 = \hb_{k,m}^H \fb_m \fb_m^H \hb_{k,m}u_{k,m},~1 \leq m \leq M.
\end{equation}
Next, we extend the feasible region of the equality constraint \eqref{cons:o1b} as in \eqref{cons:o1be} while keeping the same optimal solution to $\Prob{3}$, as shown in Lemma~\ref{lemma:2}.
\begin{equation}\label{cons:o1be}
    c_{k,m}^2 \leq \hb_{k,m}^H \fb_m \fb_m^H \hb_{k,m}u_{k,m},~1 \leq m \leq M.
\end{equation}
\begin{lemma}\label{lemma:2}
    A new problem $\Prob{3^\prime}$ which extends the feasible region of \eqref{cons:o1b} to \eqref{cons:o1be} and keeps the same objective function, the constraint in \eqref{cons:o1a} and the pair-wise discriminant constraint (the second constraint in $\Prob{3}$), reaches the same optimum as $\Prob{3}$.
\end{lemma}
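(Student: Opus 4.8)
The plan is to show that relaxing the equality \eqref{cons:o1b} to the inequality \eqref{cons:o1be} does not enlarge the optimal value, by arguing that at any optimum of the relaxed problem $\Prob{3^\prime}$ the inequality must in fact hold with equality. Since $\Prob{3^\prime}$ has a strictly larger feasible region than $\Prob{3}$, its optimal value is at least that of $\Prob{3}$; it therefore suffices to show that an optimal solution of $\Prob{3^\prime}$ can be transformed into a feasible solution of $\Prob{3}$ achieving the same objective value. First I would take an optimal tuple $\big(\{P_{s,k}^\star\}, \alpha^\star, \{c_{k,m}^\star\}, \{\fb_m^\star\}, \{u_{k,m}^\star\}\big)$ for $\Prob{3^\prime}$ and suppose, for contradiction, that for some pair $(k_0,m_0)$ the inequality \eqref{cons:o1be} is strict, i.e. $(c_{k_0,m_0}^\star)^2 < (\hb_{k_0,m_0}^H \fb_{m_0}^\star \fb_{m_0}^{\star H}\hb_{k_0,m_0})\, u_{k_0,m_0}^\star$.

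The key step is to observe the monotonic dependence of the two families of constraints on each $u_{k,m}$. The objective $\alpha$ and the pair-wise discriminant gain constraints (the second constraint in $\Prob{3}$) depend on the variables only through $\{P_{s,k}\}$, $\{c_{k,m}\}$ and $\{\fb_m\}$ — they do not involve $u_{k,m}$ at all — while the energy constraint \eqref{cons:o1a} is strictly increasing in each $u_{k,m}$ (because $T_c X_k(m) > 0$). Hence, replacing $u_{k_0,m_0}^\star$ by the smaller value $\tilde u_{k_0,m_0} = (c_{k_0,m_0}^\star)^2 / (\hb_{k_0,m_0}^H \fb_{m_0}^\star \fb_{m_0}^{\star H}\hb_{k_0,m_0})$ keeps \eqref{cons:o1be} satisfied (now with equality at $(k_0,m_0)$), leaves all discriminant-gain constraints and the objective value unchanged, and only loosens \eqref{cons:o1a}. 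Doing this for every index pair at which \eqref{cons:o1be} is slack produces a solution that is feasible for $\Prob{3^\prime}$, attains the same $\alpha^\star$, and satisfies \eqref{cons:o1b} with equality everywhere — hence is feasible for $\Prob{3}$. Therefore the optimal value of $\Prob{3}$ is at least $\alpha^\star$, and combined with the reverse inequality from feasible-region containment, the two optima coincide.

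The one subtlety I would be careful about is the degenerate case $\hb_{k,m}^H \fb_m^\star \fb_m^{\star H}\hb_{k,m} = \fb_m^{\star H}\hb_{k,m}\hb_{k,m}^H\fb_m^\star = 0$, i.e. $\fb_m^\star$ orthogonal to $\hb_{k,m}$; then the substitution above divides by zero. In that situation \eqref{cons:o1be} forces $c_{k,m}^\star = 0$, so one may simply set $u_{k,m} = 0$, which again satisfies \eqref{cons:o1b} with equality and only relaxes \eqref{cons:o1a}; the argument goes through. (One should also note that $\fb_m^\star \neq \mathbf{0}$ at any optimum, since $\fb_m = \mathbf{0}$ would make all pair-wise discriminant gains zero via $\hat\mu_{\ell,m}$ through the $c_{k,m}$'s — this is worth a one-line remark so that the "for all $m$ some channel is non-orthogonal" intuition is not needed.) I expect this boundary bookkeeping to be the only real obstacle; the core monotonicity argument is routine once the separation of which constraints depend on $u_{k,m}$ is made explicit.
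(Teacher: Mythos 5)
Your proof is correct, but it takes a genuinely different route from the paper. The paper argues by contradiction through the beamformer: assuming \eqref{cons:o1be} is strict at an optimum of $\Prob{3^\prime}$, it scales $\fb_{m^\prime}$ down to $(1-\eta)\fb_{m^\prime}$, which preserves the relaxed constraint (by continuity) while shrinking the channel-noise term $N_0\fb_{m^\prime}^H\fb_{m^\prime}$ in $\hat{\sigma}_{m^\prime}^2$, so the pair-wise discriminant gains can be increased and a larger $\alpha$ becomes attainable, contradicting optimality; tightness at the optimum then yields feasibility for $\Prob{3}$. You instead leave $\fb_m$, $\{c_{k,m}\}$ and $\{P_{s,k}\}$ untouched and push the slack variable $u_{k,m}$ down to the value $c_{k,m}^2/(\hb_{k,m}^H\fb_m\fb_m^H\hb_{k,m})$ that makes \eqref{cons:o1be} tight, exploiting that $u_{k,m}$ enters only the energy constraint \eqref{cons:o1a} monotonically (via $T_c X_k(m)>0$) and appears nowhere in the objective or the discriminant-gain constraints. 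What your route buys: it needs no strict-improvement claim (the paper's perturbation only strictly increases $\alpha$ when, e.g., $\sum_k c_{k,m^\prime}\neq 0$ and $\fb_{m^\prime}\neq\mathbf{0}$, a caveat the paper glosses over), it explicitly covers the degenerate case $\fb_m^H\hb_{k,m}=0$ where \eqref{cons:o1be} forces $c_{k,m}=0$ and $u_{k,m}=0$ restores equality, and it directly constructs a feasible point of $\Prob{3}$ with the same objective rather than inferring it from a contradiction. What the paper's route buys in exchange is the slightly stronger observation that the relaxed constraint is necessarily tight at any optimum of $\Prob{3^\prime}$ (so no post-processing of $u$ is needed), which is not required for the lemma itself. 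Both arguments, combined with the containment of the feasible region of $\Prob{3}$ in that of $\Prob{3^\prime}$ via the substitution \eqref{def:u}, give the equality of the two optima.
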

\begin{proof}
    See Appendix \ref{appendix:2}.
\end{proof}

To further address the non-convex pair-wise discriminant gain constraint (the second constraint) in $\Prob{3}$, a set of variables $\{v_{\ell,\ell^\prime,m}\}$ are introduced as follows:
\begin{equation}\label{cons:o2b}
\begin{aligned}
    &\frac{\left(\mu_{\ell,m}-\mu_{\ell^\prime,m}\right)^2}{v_{\ell,\ell^\prime,m}}\left(\sum_{k=1}^K c_{k,m}\right)^2 \\
    &\;\;= \sigma_m^2\left(\sum_{k=1}^K c_{k,m}\right)^2+\sum_{k=1}^K c_{k,m}^2\left(\sigma_{s, k}^2+\frac{\sigma_r^2}{P_{s, k}}\right)+N_0\fb_m^H\fb_m. 
\end{aligned}
\end{equation}
It follows that the pair-wise discriminant gain constraint (the second constraint) in $\Prob{3}$ can be equivalently decomposed as
\begin{equation}\label{cons:o2a}
    \sum_{m=1}^{M} v_{\ell,\ell^\prime,m} \geq \alpha.
\end{equation}
For similar reasons to \eqref{cons:o1b} and Lemma \ref{lemma:2}, the feasible region of the constraint in \eqref{cons:o2b} can be extended as that in \eqref{Eq:ExtR1} without changing the optimal solution of $\Prob{3}$.
\begin{equation}\label{Eq:ExtR1}
\begin{aligned}
    &\frac{\left(\mu_{\ell,m}-\mu_{\ell^\prime,m}\right)^2}{v_{\ell,\ell^\prime,m}}\left(\sum_{k=1}^K c_{k,m}\right)^2 \\
    &\;\;\geq \sigma_m^2\left(\sum_{k=1}^K c_{k,m}\right)^2+\sum_{k=1}^K c_{k,m}^2\left(\sigma_{s, k}^2+\frac{\sigma_r^2}{P_{s, k}}\right)+N_0\fb_m^H\fb_m. 
\end{aligned}
\end{equation}

In summary, $\Prob{3}$ can be equivalently derived as the following form:
\begin{equation*}\Prob{4}
    \begin{aligned}
        &\max_{\substack{\{P_{s, k}\},\{c_{k,m}\},\\ \{\fb_m\}, \{u_{k,m}\},\\ \{v_{\ell,\ell^\prime,m}\}, \alpha}}&& \alpha,\\
        &\qquad\,\text{s.t.}&& P_{s, k}T_{s, k}+E_{p, k}+ T_c \sum_{m=1}^M u_{k,m} X_{k}(m) \\
        &&& \qquad\qquad\qquad\quad - E_k \leq 0,~ 1\leq k\leq K,\\
        &&& \alpha - \sum_{m=1}^{M} v_{\ell,\ell^\prime,m} \leq 0 ,~\quad\; 1\leq \ell \neq \ell^\prime \leq L,\\
        &&& \frac{c_{k,m}^2}{u_{k,m}} - R_{k,m}\left(\fb_m\right) \leq 0,~ \forall (k, m),\\
        &&& Z_{m}\left(\{P_{s,k}\},\{c_{k,m}\},\fb_m\right) \\
        &&& \qquad\qquad - Q_{\ell,\ell^\prime,m}\left(\{c_{k,m}\},v_{\ell,\ell^\prime,m}\right) \leq 0,
    \end{aligned}
\end{equation*}
where 
\begin{equation*}
    \left\{
    \begin{aligned}
        & R_{k,m}\left(\fb_m\right)=\hb_{k,m}^H \fb_m \fb_m^H \hb_{k,m}, \\
        &Z_{m}\left(\{P_{s,k}\},\{c_{k,m}\},\fb_m\right) =\sigma_m^2\left(\sum_{k=1}^K c_{k,m}\right)^2 \\
        &\qquad\qquad\qquad\quad +\sum_{k=1}^K c_{k,m}^2\left(\sigma_{s, k}^2+\frac{\sigma_r^2}{P_{s, k}}\right)+N_0\fb_m^H\fb_m,\\
        &Q_{\ell,\ell^\prime,m}\left(\{c_{k,m}\},v_{\ell,\ell^\prime,m}\right) =\frac{\left(\mu_{\ell,m}-\mu_{\ell^\prime,m}\right)^2}{v_{\ell,\ell^\prime,m}}\left(\sum_{k=1}^K c_{k,m}\right)^2.
    \end{aligned}
    \right.
\end{equation*}
Although $\Prob{4}$ is still non-convex, it is in the d.c. form, as shown in Lemma \ref{lemma:3}.
\begin{lemma}\label{lemma:3}
    Problem $\Prob{4}$ is the d.c. problem.
\end{lemma}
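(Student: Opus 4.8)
The plan is to verify that $\Prob{4}$ fits the standard d.c.\ (difference-of-convex) template, namely that after rewriting every constraint in the form $g(\cdot) - h(\cdot) \leq 0$ with both $g$ and $h$ convex, the claim follows. Since the objective $\alpha$ is linear (hence trivially d.c.), I would proceed constraint by constraint.

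First I would handle the energy constraint. The left-hand side is $P_{s,k}T_{s,k} + E_{p,k} + T_c\sum_m u_{k,m}X_{k}(m) - E_k$, which is affine in $\{P_{s,k}\}$ and $\{u_{k,m}\}$ (recall $X_k(m)$ is a fixed constant, $T_c, T_{s,k}, E_{p,k}, E_k$ are constants); an affine function is convex, and it is written as ``convex $-$ constant,'' so this constraint is d.c. Similarly the constraint $\alpha - \sum_m v_{\ell,\ell',m} \leq 0$ is affine in $(\alpha, \{v_{\ell,\ell',m}\})$, hence d.c. Next, for the constraint $\frac{c_{k,m}^2}{u_{k,m}} - R_{k,m}(\fb_m) \leq 0$: the function $(c,u)\mapsto c^2/u$ is the quadratic-over-linear function, which is jointly convex on $\Rbb\times\Rbb_{>0}$ (a textbook fact, e.g.\ via its epigraph being a rotated second-order cone, or by checking the Hessian is PSD); and $R_{k,m}(\fb_m) = \fb_m^H \hb_{k,m}\hb_{k,m}^H \fb_m = \|\hb_{k,m}^H\fb_m\|^2$ is a convex quadratic form in $\fb_m$ (the matrix $\hb_{k,m}\hb_{k,m}^H$ is PSD). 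So this constraint has the form $(\text{convex in }(c_{k,m},u_{k,m})) - (\text{convex in }\fb_m) \leq 0$, which is d.c.

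The main obstacle is the last constraint, $Z_m(\cdot) - Q_{\ell,\ell',m}(\cdot) \leq 0$, and here I expect the bulk of the argument. For $Z_m$: the term $\sigma_m^2(\sum_k c_{k,m})^2$ is a convex quadratic (square of a linear form, scaled by $\sigma_m^2 \geq 0$); the term $N_0\fb_m^H\fb_m = N_0\|\fb_m\|^2$ is convex; the delicate term is $\sum_k c_{k,m}^2\big(\sigma_{s,k}^2 + \frac{\sigma_r^2}{P_{s,k}}\big)$, which I would argue is jointly convex in $(\{c_{k,m}\},\{P_{s,k}\})$ by splitting it: $\sum_k \sigma_{s,k}^2 c_{k,m}^2$ is a convex quadratic, while each summand $\sigma_r^2 c_{k,m}^2 / P_{s,k}$ is again the quadratic-over-linear function $c_{k,m}^2/P_{s,k}$ (convex on $\Rbb\times\Rbb_{>0}$) scaled by $\sigma_r^2 > 0$ — so $Z_m$ is a sum of convex functions, hence convex. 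For $Q_{\ell,\ell',m} = \frac{(\mu_{\ell,m}-\mu_{\ell',m})^2}{v_{\ell,\ell',m}}\big(\sum_k c_{k,m}\big)^2$, note $(\mu_{\ell,m}-\mu_{\ell',m})^2$ is a nonnegative constant, and $\frac{(\sum_k c_{k,m})^2}{v_{\ell,\ell',m}}$ is once more a quadratic-over-linear function — composition of the affine map $\{c_{k,m}\}\mapsto \sum_k c_{k,m}$ with $(t,v)\mapsto t^2/v$ — hence jointly convex in $(\{c_{k,m}\}, v_{\ell,\ell',m})$ on the region $v_{\ell,\ell',m}>0$. Thus this constraint is ``convex $-$ convex $\leq 0$,'' completing the d.c.\ structure.

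I would close by remarking that all the implicit domain restrictions ($P_{s,k} > 0$ from the energy/sensing model, $u_{k,m}\geq 0$ by \eqref{def:u}, $v_{\ell,\ell',m} > 0$ from \eqref{cons:o2b}) are exactly the domains on which the quadratic-over-linear pieces are convex, so no issue of ill-definedness arises, and therefore $\Prob{4}$ is a d.c.\ program — each constraint being a difference of two convex functions and the objective linear. The routine verifications (PSD-ness of the relevant Hessians, convexity of quadratic-over-linear) I would relegate to a short appendix or cite as standard convex-analysis facts.
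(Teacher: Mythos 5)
Your proposal is correct and follows essentially the same route as the paper's own proof: the objective and the first two constraints are affine, $R_{k,m}$ is a convex quadratic, and the remaining terms ($c_{k,m}^2/u_{k,m}$, the $c_{k,m}^2/P_{s,k}$ part of $Z_m$, and $Q_{\ell,\ell^\prime,m}$ via composition with an affine map) are all instances of the jointly convex quadratic-over-linear function $x^2/y$, $y>0$, whose convexity the paper verifies by the same Hessian computation you invoke. Your explicit remark on the domain restrictions ($u_{k,m}$, $P_{s,k}$, $v_{\ell,\ell^\prime,m}$ positive) is a minor but welcome addition; otherwise the two arguments coincide.
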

\begin{proof}
    See Appendix~\ref{appendix:3}.
\end{proof}

\subsection{SCA based Algorithm}
To solve $\Prob{4}$, the method of SCA is adopted, which iterates between the following two steps until convergence to obtain a suboptimal solution, where all Karush-Kuhn-Tucker (KKT) conditions of $\Prob{4}$ are satisfied.
\begin{itemize}
    \item \emph{Convex approximation}: Based on a reference point, a convex approximation of $\Prob{4}$ is derived using Taylor expansion. The feasible region of the approximated problem is a subset of that of $\Prob{4}$. This guarantees that its solution is feasible for $\Prob{4}$. 

    \item \emph{Reference point update}: The approximated problem is optimally solved and the solution is used as the new reference point for the next iteration. 
\end{itemize}
In the sequel, the detailed procedures to solve $\Prob{4}$ are presented. 

\subsubsection{Convex approximation}
We first randomly initialize the optimization variables and set the counter $t=0$. Then, for an arbitrary iteration, i.e., $t>0$, the convex approximation of $\Prob{4}$ is described as follows. 

According to Lemma \ref{lemma:3}, $R_{k,m}\left(\fb_m\right)$ and $Q_{\ell,\ell^\prime,m}\left(\{c_{k,m}\},v_{\ell,\ell^\prime,m}\right)$ are both differentiable convex functions. Therefore, they are no less than their first-order Taylor expansions with the reference point being the optimal solution in the $(t-1)$-th iteration, i.e.,
\begin{align}
    &R_{k,m}(\fb_m) \geq \hat{R}_{k,m}^{[t]}(\fb_m), \\
    &Q_{\ell,\ell^\prime,m}\left(\{c_{k,m}\},v_{\ell,\ell^\prime,m}\right) \geq \hat{Q}_{\ell,\ell^\prime,m}^{[t]}\left(\{c_{k,m}\},v_{\ell,\ell^\prime,m}\right),
\end{align}
where $\hat{R}_{k,m}^{[t]}(\fb_m)$ and $\hat{Q}_{\ell,\ell^\prime,m}^{[t]}\left(\{c_{k,m}\},v_{\ell,\ell^\prime,m}\right)$ are the first-order Taylor expansions at $\fb_m^{[t]}$ and $\left(\{c_{k,m}^{[t]}\},v_{\ell,\ell^\prime,m}^{[t]}\right)$ respectively. They are given by
\begin{align}
    &\hat{R}_{k,m}^{[t]}(\fb_m) = R_{k,m}(\fb_m^{[t]}) + \left(\fb_m-\fb_m^{[t]}\right)^H \Ab_{k,m}^{[t]},\label{def:Rt}\\
    &\hat{Q}_{\ell,\ell^\prime,m}^{[t]}\left(\{c_{k,m}\},v_{\ell,\ell^\prime,m}\right) = Q_{\ell,\ell^\prime,m}\left(\left\{c_{k,m}^{[t]}\right\},v_{\ell,\ell^\prime,m}^{[t]}\right) \notag\\
    &\quad + B_{\ell,\ell^\prime,m}^{[t]}\left(v_{\ell,\ell^\prime,m}-v_{\ell,\ell^\prime,m}^{[t]}\right) +
    \sum_{k=1}^K C_{k,m}^{[t]}\left(c_{k,m}-c_{k,m}^{[t]}\right),\label{def:Qt}
\end{align}
where 
\begin{equation}
    \left\{\begin{aligned}
        &\Ab_{k,m}^{[t]} = \left.\frac{\partial R}{\partial \fb_m}\right|_{\fb_m=\fb_m^{[t]}} = 2 \hb_{k,m} \hb_{k,m}^H \fb_m^{[t]}, \\
        &B_{\ell,\ell^\prime,m}^{[t]} = \left.\frac{\partial Q}{\partial v_{\ell,\ell^\prime,m}}\right|_{v_{\ell,\ell^\prime,m}=v_{\ell,\ell^\prime,m}^{[t]}} \\
        &\qquad\quad= -\left(\frac{\left(\mu_{\ell,m}-\mu_{\ell^\prime,m}\right)\sum_{k=1}^{K} c_{k,m}^{[t]}}{v_{\ell,\ell^\prime,m}^{[t]}}\right)^2, \\
        &C_{k,m}^{[t]} = \left.\frac{\partial Q}{\partial c_{k,m}}\right|_{c_{k,m}=c_{k,m}^{[t]}} \\
        &\qquad\,= \frac{2 \left(\sum_{k=1}^{K} c_{k,m}^{[t]}\right)\left(\mu_{\ell,m}-\mu_{\ell^\prime,m}\right)^{2}}{v_{\ell,\ell^\prime,m}^{[t]}}.
    \end{aligned}\right.
\end{equation}
By replacing $R_{k,m}(\fb_m)$ and $Q_{\ell,\ell^\prime,m}\left(\{c_{k,m}\},v_{\ell,\ell^\prime,m}\right)$ with $\hat{R}_{k,m}^{[t]}(\fb_m)$ and $\hat{Q}_{\ell,\ell^\prime,m}^{[t]}\left(\{c_{k,m}\},v_{\ell,\ell^\prime,m}\right)$ respectively, an approximated convex problem of $\Prob{4}$ can be derived as
\begin{equation*}\Prob{5}
    \begin{aligned}
        &\max_{\substack{\{P_{s, k}\},\{c_{k,m}\},\\ \{\fb_m\},\{u_{k,m}\},\\ \{v_{\ell,\ell^\prime,m}\},\alpha}}&& \alpha,\\
        &\qquad\,\text{s.t.}&& P_{s, k}T_{s, k}+E_{p, k}+ T_c \sum_{m=1}^M u_{k,m} X_{k}(m) \\
        &&&\qquad\qquad\qquad\quad - E_k \leq 0,~ 1\leq k\leq K,\\
        &&& \alpha - \sum_{m=1}^{M} v_{\ell,\ell^\prime,m} \leq 0,~\quad\; 1\leq \ell \neq \ell^\prime \leq L,\\
        &&& \frac{c_{k,m}^2}{u_{k,m}} - \hat{R}_{k,m}^{[t]}\left(\fb_m\right) \leq 0,~ \forall (k,m),\\
        &&& Z_{m}\left(\{P_{s,k}\},\{c_{k,m}\},\fb_m\right) \\
        &&&\qquad\qquad - \hat{Q}_{\ell,\ell^\prime,m}^{[t]}\left(\{c_{k,m}\},v_{\ell,\ell^\prime,m}\right) \leq 0,
    \end{aligned}
\end{equation*}
where $Z_{m}\left(\{P_{s,k}\},\{c_{k,m}\},\fb_m\right)$, $\hat{R}_{k,m}^{[t]}\left(\fb_m\right)$ and $\hat{Q}_{\ell,\ell^\prime,m}^{[t]}\left(\{c_{k,m}\},v_{\ell,\ell^\prime,m}\right)$ are the same as those defined in $\Prob{4}$.

\subsubsection{Solution to $\Prob{5}$}
The primal-dual method is used to optimally solve $\Prob{5}$. First, the Lagrangian function of $\Prob{5}$ is given by
\begin{equation}
\begin{aligned}
    \Lcal_\Prob{5}&=-\alpha \\
    +&\sum_{k=1}^K \beta_k\left(P_{s, k}T_{s, k}+E_{p, k}+ T_c \sum_{m=1}^M u_{k,m} X_{k}(m) - E_k\right) \\
    +&\sum_{\ell^\prime=1}^L\sum_{\ell\neq\ell^\prime} \gamma_{\ell,\ell^\prime}\left(\alpha - \sum_{m=1}^{M} v_{\ell,\ell^\prime,m}\right) \\
    +& \sum_{k=1}^K\sum_{m=1}^M \theta_{k,m}\left[\frac{c_{k,m}^2}{u_{k,m}} - \hat{R}_{k,m}^{[t]}\left(\fb_m\right)\right] \\
    +&\sum_{m=1}^M\sum_{\ell^\prime=1}^L\sum_{\ell\neq\ell^\prime} \lambda_{\ell,\ell^\prime,m}\Big[Z_{m}\left(\{P_{s,k}\},\{c_{k,m}\},\fb_m\right) \\
    &\qquad\qquad\qquad\qquad\qquad \left. - \hat{Q}_{\ell,\ell^\prime,m}^{[t]}\left(\{c_{k,m}\},v_{\ell,\ell^\prime,m}\right)\right],
\end{aligned}
\end{equation}
where $\beta_k,\gamma_{\ell,\ell^\prime},\theta_{k,m}$ and $\lambda_{\ell,\ell^\prime,m}$ are all positive Lagrange multipliers. Then, some useful KKT conditions are given by
\begin{align}
    &\frac{\partial \Lcal_{\Prob{5}}}{\partial P_{s,k}} = \beta_k T_{s,k} - \frac{\sigma_r^2 c_{k,m}^2}{P_{s,k}^2} = 0,\\
    &\frac{\partial \Lcal_{\Prob{5}}}{\partial \fb_m} = 2N_0\fb_m - \sum_{k=1}^K 2\theta_{k,m}\hb_{k,m}\hb_{k,m}^H\fb_m^{[t]} = 0,\\
    &\frac{\partial \Lcal_{\Prob{5}}}{\partial c_{k,m}} = \frac{2\theta_{k,m}}{u_{k,m}} + \sum_{\ell^\prime=1}^L\sum_{\ell\neq\ell^\prime} \lambda_{\ell,\ell^\prime,m}\Bigg(\frac{2\sigma_r^2 c_{k,m}}{P_{s,k}}+ 2c_{k,m}\sigma_{s,k}^2 \notag\\
    &\quad\ \left.+2\sigma_m^2\sum_{k=1}^Kc_{k,m}-\frac{2\left(\sum_{k=1}^Kc_{k,m}^{[t]}\right)(\mu_{\ell,m}-\mu_{\ell^\prime,m})^2}{v_{\ell,\ell^\prime,m}^{[t]}}\right),
\end{align}
which can be respectively derived as below to reach the optimal value of $P_{s, k}^{[t+1]}$, $\fb_m^{[t+1]}$ and $c_{k,m}^{[t+1]}$
\begin{align}
    &P_{s,k}^{[t+1]} = \frac{\sigma_r c_{k,m}}{\sqrt{\beta_k T_{s,k}}}, \label{solve:P}\\
    &\fb_m^{[t+1]} = \frac{1}{2N_0}\sum_{k=1}^K 2\theta_{k,m}\hb_{k,m}\hb_{k,m}^H\fb_m^{[t]}, \label{solve:f}\\
    &c_{k,m}^{[t+1]} = \left[\displaystyle\sum_{\ell^\prime=1}^L\displaystyle\sum_{\ell\neq\ell^\prime}\lambda_{\ell,\ell^\prime,m}\left(\frac{\sum_{k=1}^Kc_{k,m}^{[t]}(\mu_{\ell,m}-\mu_{\ell^\prime,m})^2}{v_{\ell,\ell^\prime,m}^{[t]}}\right.\right. \notag\\
    &\qquad \left.-\sigma_m^2\displaystyle\sum_{k^\prime\neq k}c_{k^\prime,m}\Bigg)-\frac{2\theta_{k,m}}{u_{k,m}}\right]\Big/\Bigg[\left(\frac{\sigma_r^2}{P_{s,k}}+\sigma_{s,k}^2+\sigma_m^2\right) \notag\\
    &\qquad\quad \left.\left(\sum_{\ell^\prime=1}^L\sum_{\ell\neq\ell^\prime}\lambda_{\ell,\ell^\prime,m}\right)\right]. \label{solve:c}
\end{align}
Based on the results above, the multipliers $\beta_k,\gamma_{\ell,\ell^\prime},\theta_{k,m}$ and $\lambda_{\ell,\ell^\prime,m}$ can be updated with their stepsizes $\eta_{\beta_k},\eta_{\gamma_{\ell,\ell^\prime}},\eta_{\theta_{k,m}}$ and $\eta_{\lambda_{\ell,\ell^\prime,m}}$ to solve the problem in the next round, respectively. The primal-dual method is presented in Algorithm~\ref{alg:1}. Compared directly adopting the typical algorithms in existing toolbox like CVX, Algorithm~\ref{alg:1} enjoys the benefits of using the closed-form solutions in \eqref{solve:P}, \eqref{solve:f} and \eqref{solve:c}. Therefore, the computational complexity of Algorithm~\ref{alg:1} is reduced to $\Ocal(I(11N_r K M+N_r L^2 M+\frac{1}{2}L^2 K^2 M))$ with the assumption that Algorithm~\ref{alg:1} converges after $I$ loops of computing. 

\begin{algorithm}[!t]
    \renewcommand{\algorithmicrequire}{\textbf{Input:}}
    \renewcommand{\algorithmicensure}{\textbf{Output:}}
    \caption{Primal-dual method for solving $\Prob{5}$ in SCA iteration $t$}
    \label{alg:1}
    \begin{algorithmic}[1]
        \REQUIRE Channel gain $\{\hb_{k,m}\}$, feature elements' class centroid $\{\mu_{\ell,m}\}$ and variance $\{\sigma_{m}^2\}$, communication latency $\{T_{s,k}\}$ and other given parameters derived from iteration $t-1$
	\ENSURE $\{P_{s, k}^{[t+1]}, c_{k,m}^{[t+1]}, \fb_m^{[t+1]}, \alpha^{[t+1]}, u_{k,m}^{[t+1]}, v_{\ell,\ell^\prime,m}^{[t+1]}\}$
	\STATE Initialize $\{\beta_k\}$, $\{\gamma_{\ell,\ell^\prime}\}$, $\{\theta_{k,m}\}$, $\{\lambda_{\ell,\ell^\prime,m}\}$, the step size $\{\eta_{\beta_k}^{(0)}\}$, $\{\eta_{\gamma_{\ell,\ell^\prime}}^{(0)}\}$, $\{\eta_{\theta_{k,m}}^{(0)}\}$, $\{\eta_{\lambda_{\ell,\ell^\prime,m}}^{(0)}\}$ and $i \gets 0$;
	\WHILE {\textnormal{not convergence}}
            \STATE Derive $\{P_{s, k}^{[t+1]}\}$, $\{\fb_m^{[t+1]}\}$ and $\{c_{k,m}^{[t+1]}\}$ using \eqref{solve:P},\eqref{solve:f} and \eqref{solve:c}, respectively;
            \STATE Update the multipliers as 
            \begin{equation*}\left\{
            \begin{aligned}
                &\beta_k^{(i+1)} = \max\left\{\beta_k^{(i)}+\eta_{\beta_k}\frac{\partial \Lcal_{\Prob{5}}}{\partial \beta_k}, 0\right\},\\
                &\gamma_{\ell,\ell^\prime}^{(i+1)} = \max\left\{\gamma_{\ell,\ell^\prime}^{(i)}+\eta_{\gamma_{\ell,\ell^\prime}}\frac{\partial \Lcal_{\Prob{5}}}{\partial \gamma_{\ell,\ell^\prime}}, 0\right\},\\
                &\theta_{k,m}^{(i+1)} = \max\left\{\theta_{k,m}^{(i)}+\eta_{\theta_{k,m}}\frac{\partial \Lcal_{\Prob{5}}}{\partial \theta_{k,m}}, 0\right\},\\
                &\lambda_{\ell,\ell^\prime,m}^{(i+1)} = \max\left\{\lambda_{\ell,\ell^\prime,m}^{(i)}+\eta_{\lambda_{\ell,\ell^\prime,m}}\frac{\partial \Lcal_{\Prob{5}}}{\partial \lambda_{\ell,\ell^\prime,m}}, 0\right\};
            \end{aligned}\right.
            \end{equation*}
            \STATE $i \gets i + 1$;
	\ENDWHILE
    \end{algorithmic}
\end{algorithm}
As a result, the optimal solution of $\Prob{5}$ can be obtained and is denoted as $P_{s, k}^{[t+1]}$, $c_{k,m}^{[t+1]}$, $\fb_m^{[t+1]}$, $\alpha^{[t+1]}$, $u_{k,m}^{[t+1]}$, $v_{\ell,\ell^\prime,m}^{[t+1]}$, which are used as the reference points for the $(t+1)$-th iteration.

\subsubsection{Solution to $\Prob{3}$}
Based on the solution to $\Prob{5}$ and the SCA method described before, the solution procedure to $\Prob{3}$ is summarized in Algorithm~\ref{alg:2}.

\begin{algorithm}[!t]
    \renewcommand{\algorithmicrequire}{\textbf{Input:}}
    \renewcommand{\algorithmicensure}{\textbf{Output:}}
    \caption{Joint Sensing Power, Transmit Precoder and Receive Beamformer Design}
    \label{alg:2}
    \begin{algorithmic}[1]
        \REQUIRE Channel gain $\{\hb_{k,m}\}$, device energy $\{E_k\}$, sensing time $\{T_{s,k}\}$, communication time $T_c$, computation energy $\{E_{p,k}\}$
	\ENSURE $\{P_{s, k}^*\}$, $\{c_{k,m}^*\}$, $\{\fb_m^*\}$ and $\alpha^*$
	\STATE Initialize $t \gets 0$, $\{P_{s,k}^{[0]}\}$, $\{c_{k,m}^{[0]}\}$, $\{\fb_m^{[0]}\}$ in feasible region of $\Prob{4}$;
        \STATE Calculate the initial value of $\{v_{\ell,\ell^\prime,m}^{[0]}\}$;
        \STATE Initialize the auxiliary function $\hat{R}_{k,m}^{[0]}\left(\fb_m\right)$ and $\hat{Q}_{\ell,\ell^\prime,m}^{[0]}\left(\{c_{k,m}\},v_{\ell,\ell^\prime,m}\right)$;
	\WHILE {\textnormal{not convergence}}
            \STATE Derive $\Prob{5}$ by relaxing $\Prob{4}$ with $\hat{R}_{k,m}^{[t]}\left(\fb_m\right)$ and $\hat{Q}_{\ell,\ell^\prime,m}^{[t]}\left(\{c_{k,m}\},v_{\ell,\ell^\prime,m}\right)$;
            \STATE Solve $\Prob{5}$ with Algorithm~\ref{alg:1} to get optimum $\{P_{s, k}^{[t+1]}, c_{k,m}^{[t+1]}, \fb_m^{[t+1]}, \alpha^{[t+1]}, u_{k,m}^{[t+1]}, v_{\ell,\ell^\prime,m}^{[t+1]}\}$;
            \STATE Calculate the new auxiliary function $\hat{R}_{k,m}^{[t+1]}\left(\fb_m\right)$ and $\hat{Q}_{\ell,\ell^\prime,m}^{[t+1]}\left(\{c_{k,m}\},v_{\ell,\ell^\prime,m}\right)$;
            \STATE $t \gets t + 1$;
	\ENDWHILE
        \STATE The optimal solution $P_{s, k}^* \gets P_{s, k}^{[t]}$, $c_{k,m}^* \gets c_{k,m}^{[t]}$, $\fb_m^* \gets \fb_m^{[t]}$ and $\alpha^* \gets \alpha^{[t]}$;
    \end{algorithmic}
\end{algorithm}

\section{Simulation Results}\label{sec:5}

\subsection{Simulation Setup}
\subsubsection{Network settings}
A single-cell network is used to complete edge-device co-inference tasks. There is one edge server equipped with an $8$-antenna AP located at the center and $K$ single-antenna devices randomly located in a ring with radius in the range of [$R$, $R+0.05$] kilometers. By default, $R$ is set as $0.45$ and $K$ is set as $3$ unless specified otherwise. The channel gain of the link between the edge server and device $k$ is modeled as $\hb_k=|\varphi_k\rhob_k|^2$. $[\varphi_k]_{\dB}=-[\mathbf{PL}_k]_{\dB}+[\zeta_k]_{\dB}$ is the large-scale fading channel coefficient, where $[\mathbf{PL}_k]_{\dB}=128.1+37.6\log_{10}d_k$ is the path loss in dB, $d_k$ is the distance between device $k$ and the edge server, $[\zeta_k]_{\dB}$ is the shadowing in dB which follows the Gaussian distribution of $\Ncal(0,\sigma_\zeta^2)$. On the other hand, $\rhob_k\sim\CNcal(0,\Ib)$ stands for the small-scale fading channel coefficient, where Rayleigh small-scale fading is considered in the simulation. 
The variances of sensing noise $\sigma_r^2$ and clutter signal $\sigma_{s,k}^2$ are both set to $0.2$. The channel noise variance $N_0$ is set to $1$ and the variance of shadow fading $\sigma_\zeta^2=8$ dB.

\subsubsection{Inference tasks}
A human motion recognition task is selected to evaluate the performance of the proposed algorithm. The aim of this task is to distinguish $4$ human motions, i.e., adult pacing, adult walking, child pacing and child walking, where the heights of adults are uniformly randomized between $[1.6\textrm{m}, 1.9\textrm{m}]$ and the heights of children follow the uniform distribution in $[0.9\textrm{m}, 1.2\textrm{m}]$. The facing directions of adults and children are considered to be uniformly distributed in the range $[-180^\circ, 180^\circ]$ and the speed of moving is divided into three classes, with $0$ m/s, $0.25H$ m/s and $0.5H$ m/s representing standing, pacing and walking where $H$ is the height of each individual. The sensing time $T_s$ and communication time $T_c$ of devices are set to $1$ second and the computation energy $E_{p,k}$ is set to $0.1$ Joule. The dataset of radar sensing signals used for training and testing is generated by the wireless sensing simulator proposed in \cite{li2021wireless}.

\subsubsection{Inference Models}
To identify the motion from local features, two machine learning models are adopted: a support vector machine (SVM) model and a multi-layer perceptron (MLP) neural network. 
In this experiment, the MLP network is trained with Adam optimizer\cite{diederik2015adam}, with the numbers of neurons in the hidden layers of MLP set to $80$ and $40$. The dataset generated by the simulator proposed in \cite{li2021wireless} is separated into a training dataset containing 6400 samples and a test dataset which contains 1600 samples. The training dataset is considered as the ground-truth data (free of noise) to train both of the two ML models. The testing dataset is distorted by clutter and noise through the sensing process and communication process determined by the three schemes mentioned below. 

\subsubsection{Inference algorithms}
To verify the priority of the proposed scheme, three algorithms are compared in the experiments, as listed below. 
\begin{itemize}
    \item \emph{Our proposal}: All parameters are allocated by the proposed scheme in Algorithm~\ref{alg:2}.
    \item \emph{Existing AirComp scheme}: The sensing power is allocated randomly and other parameters are allocated following the AirComp scheme in \cite{wen2022task2}.
    \item \emph{Baseline}: The sensing power is allocated randomly, the receive beamforming is set to a constant of all elements' transmission and a maximum steering power is allocated under the energy constraint \eqref{cons:energy2}.
\end{itemize}

All experiments are implemented using Python 3.8.5 on a Windows 10 server with one NVIDIA\textsuperscript{\textregistered} GeForce\textsuperscript{\textregistered} GTX 1070 GPU 8GB and one Intel\textsuperscript{\textregistered} Core\texttrademark{} i7-8700 CPU.

\subsection{Performance Comparison}
In this part, the relations between the inference accuracy and the minimum pair-wise discriminant gain are firstly presented. Then, the impact of the cell radius on inference accuracy is analyzed. Finally, the three algorithms are compared in terms of the SVM model and the MLP model with different numbers of devices and different device energy thresholds, respectively.

\subsubsection{Relation between inference accuracy and minimum pair-wise discriminant gain}
The relations between the inference accuracy and the minimum pair-wise discriminant gain for both two machine learning models are illustrated in Fig. \ref{fig:acc_dg}. 
It shows that the inference accuracy grows from $25\%$ to $95\%$ as the minimum pair-wise discriminant gain increases for both AI models. Also, the SVM model reaches a higher inference accuracy than the MLP network, particularly, the accuracy of the SVM model gets nearly $40\%$ when the minimum pair-wise discriminant gain is $10$ while the accuracy of the MLP model is still $25\%$.

\begin{figure}[!t]
    \centering
    \includegraphics[width=\linewidth]{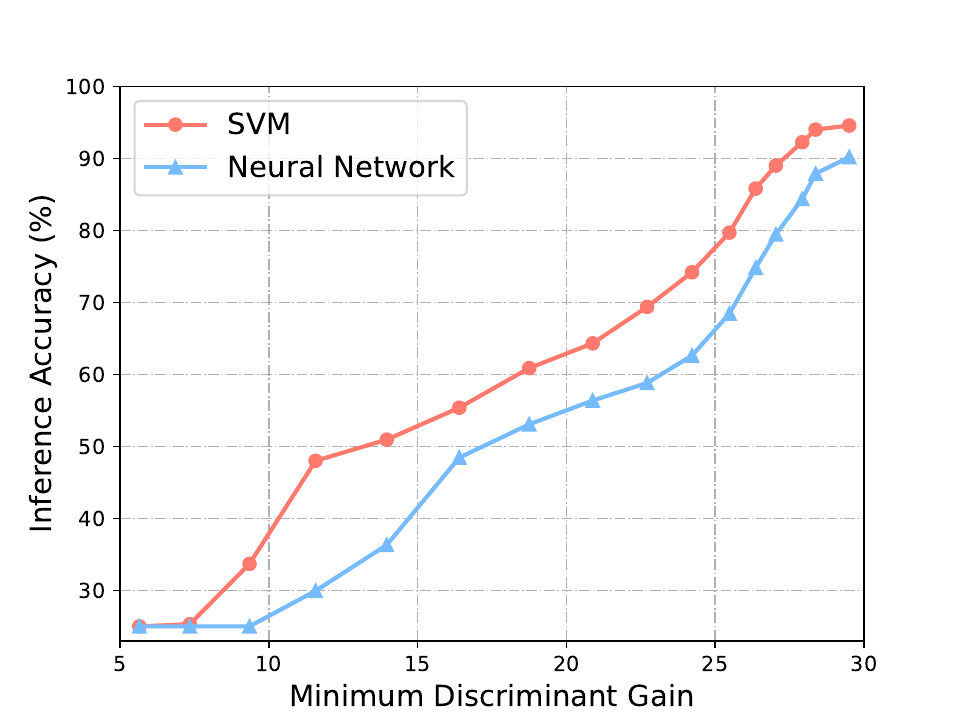}
    \caption{Inference accuracy versus minimum pair-wise discriminant gain on different models}
    \label{fig:acc_dg}
\end{figure}

\subsubsection{Relation between inference accuracy and cell radius}
Fig.~\ref{fig:acc_pl} presents the change of inference accuracy under different cell radiuses. It shows that the inference accuracies of both machine learning models decrease when the cell radius $R$ increases from $200m$ to $800m$. That's because the distances between the devices and the edge server turn to be larger with a larger $R$, leading to stronger path losses and weaker channel gains. This causes a larger communication distortion level and reduces the inference accuracy. Besides, Fig.~\ref{fig:acc_pl} also illustrates the effect of sensing distortion on inference accuracy. Both two machine learning models perform better in the case of low sensing distortion ($\sigma_r^2=0.2$) than in  the case of high sensing distortion ($\sigma_r^2=1.4$). 


\begin{figure}[!t]
    \centering
    \includegraphics[width=\linewidth]{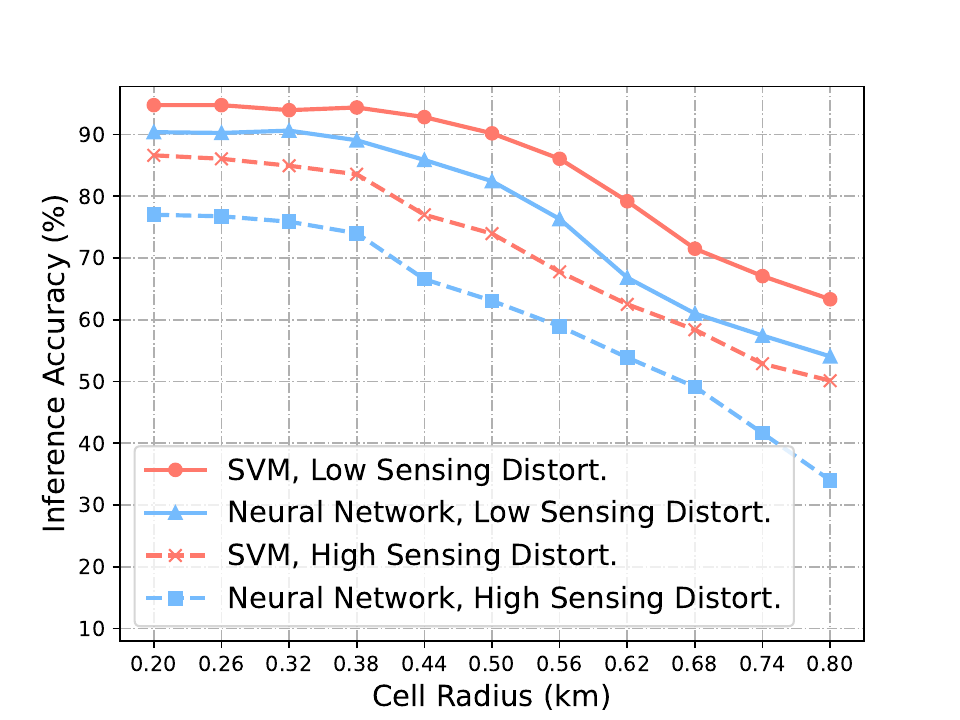}
    \caption{Inference accuracy versus cell radius on different models}
    \label{fig:acc_pl}
\end{figure}

\subsubsection{Inference accuracy v.s. number of devices}
The inference accuracies of the three schemes versus different number of devices are presented in Fig.~\ref{figgr:dev}. The performance of all three schemes increases as the number of devices increases for both machine learning models. It is because using more devices and aggregating their local features can reduce both the sensing distortion and communication noise. Besides, the proposed ISCC scheme outperforms the existing AirComp scheme proposed in \cite{wen2022task2}. The reasons are three folds. First, the proposed scheme adopts a more reasonable metric, say the minimum pair-wise discriminant gain, instead of the average pair-wise discriminant gain used in the existing AirComp scheme, leading to a balanced and enhanced achievable inference accuracy. Besides, the sensing stage of the inference task, which is separately designed in the existing AirComp scheme, is jointly designed in this work. Furthermore, rather than separately optimizing the aggregation of the feature elements in the existing scheme, they are jointly optimized, which allows more resources being assigned to the important elements. In addition, the inference accuracies of all scheme gradually saturate, since involving more devices has little contribution on suppressing the sensing and channel noise when the number of devices is large. The inference accuracy of the existing AirComp scheme saturates first because it's achievable inference accuracy is lower than that of the proposed scheme but it can well suppress the channel noise with a small number of devices.

\begin{figure*}[!t]
    \centering
    \subfloat[Inference accuracy of MLP versus the number of devices]{
        \includegraphics[width=.48\linewidth]{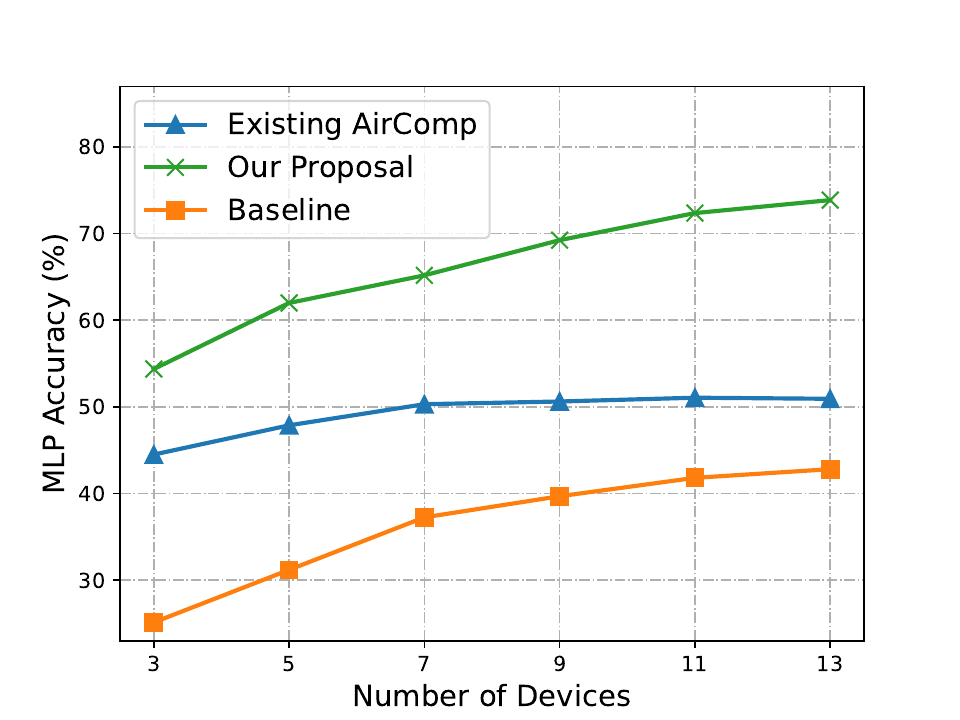}
        \label{fig:mlp_dev}
    }%
    \subfloat[Inference accuracy of SVM versus the number of devices]{
        \includegraphics[width=.48\linewidth]{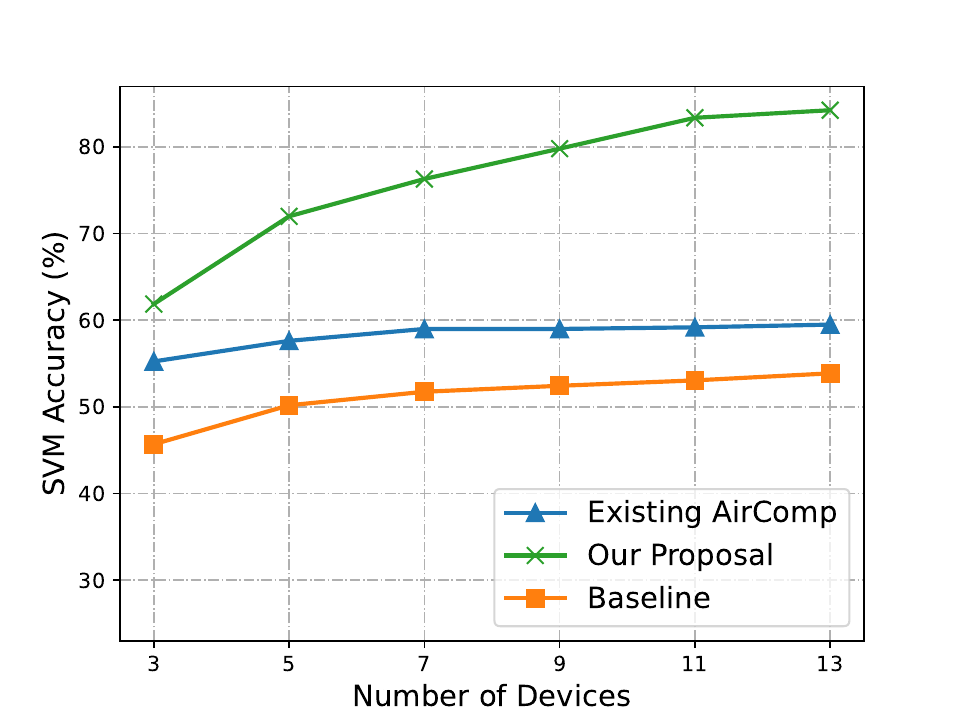}
        \label{fig:svm_dev}
    }
    \caption{Inference accuracy of different algorithms with different numbers of devices on MLP and SVM models}
    \label{figgr:dev}
\end{figure*}

\subsubsection{Inference accuracy v.s. device energy}
Fig.~\ref{figgr:pow} shows the impact of the device total energy on the accuracies of inference task in three schemes. It is shown that as a higher device energy is permitted, all of the three schemes have better inference accuracy since a higher device energy threshold means the devices can set larger sensing power and communication power to suppress the corresponding noise. In addition, the proposed scheme has a better performance than other two schemes for similar reasons as mentioned before.

\begin{figure*}[!t]
    \centering
    \subfloat[Inference accuracy of MLP versus device total energy]{
        \includegraphics[width=.48\linewidth]{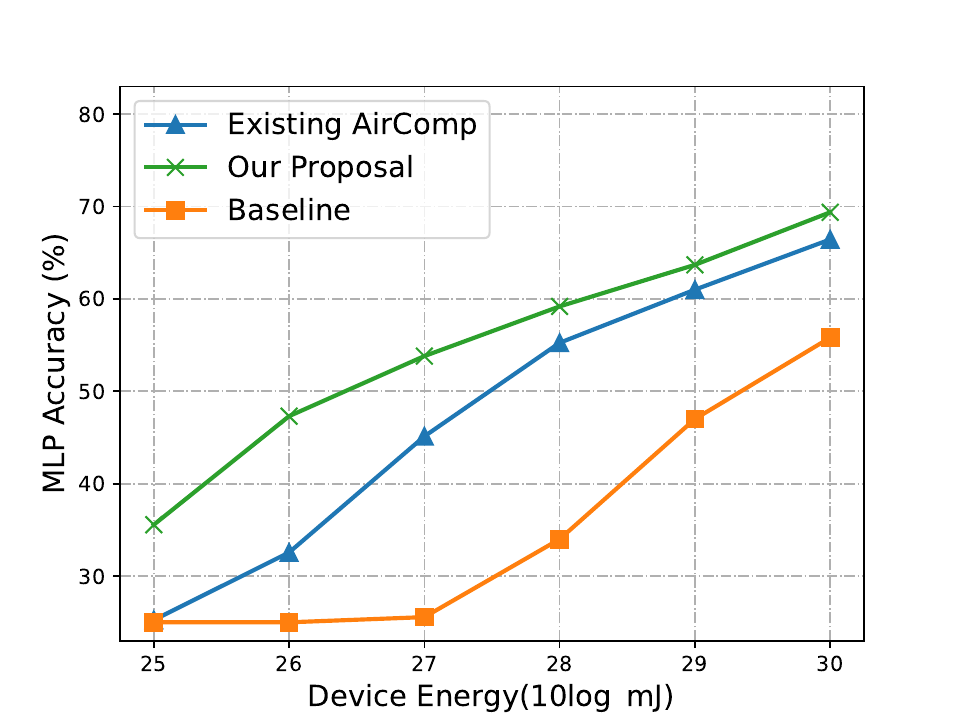}
        \label{fig:mlp_pow}
    }%
    \subfloat[Inference accuracy of SVM versus device total energy]{
        \includegraphics[width=.48\linewidth]{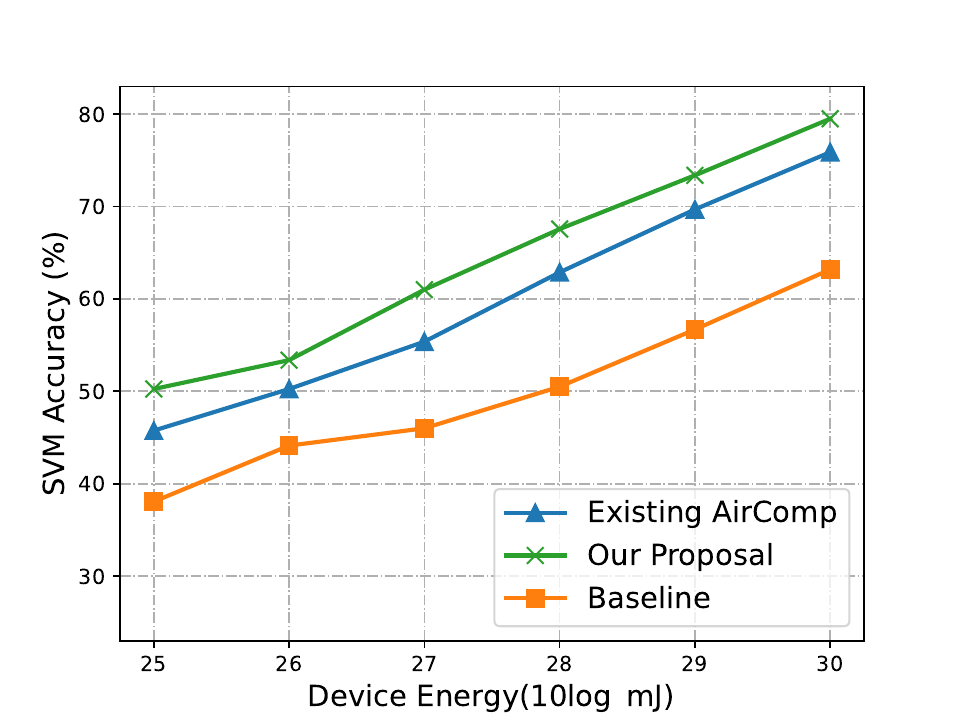}
        \label{fig:svm_pow}
    }
    \caption{Inference accuracy of different algorithms with different device total energy on MLP and SVM models}
    \label{figgr:pow}
\end{figure*}

\section{Conclusion}\label{sec:6}
In this paper, an AirComp based ISCC scheme was proposed for edge-device co-inference tasks. Compared to existing schemes, the proposed scheme enjoyed advantages from three aspects. To begin with, a novel design criterion, called maximum minimum pair-wise discriminant gain, was adopted, which enlarged the distance of the closest pair in the feature space, resulting in a balanced and enhanced achievable inference accuracy. Besides, the sensing, computation and communication processes were jointly investigated from a systematic view, allowing more flexible resource coordination and sharing among the three modules. Moreover, the aggregation of all feature elements was jointly designed, enabling adaptive resource allocation among different feature elements. Benefiting from the above three advantages, the proposed scheme enjoyed a more reasonable design goal and better resource utilization, thus leading to better inference performance compared to existing schemes as verified by the experiments.

This work opens several interesting directions for task-oriented ISCC scheme designs. One is to enhance the inference accuracy over time-variant channels or device scheduling under limited communication resources. Another is to design the scheme with some devices only acquiring part of the sensory view.

\appendix

\subsection{Proof of Lemma~\ref{lemma:1}}\label{appendix:1}
As mentioned in \eqref{def:GTx}, the ground-true feature element can be written as the average of $L$ independent Gaussian random variables
    \begin{equation}
        x(m) = \frac{1}{L}\sum_{\ell=1}^L x_\ell(m),
    \end{equation}
    where 
    \begin{equation}\label{dist:x}
        x_\ell(m)\sim\Ncal\left(\mu_{\ell,m},\sigma_m^2\right).
    \end{equation}
    Then by substituting it into \eqref{def:x}, the local feature element can be rewritten as
    \begin{equation}
    \begin{split}
        x_k(m)&=\frac{1}{L}\sum_{\ell=1}^L x_\ell(m)+c_{s, k}(m)+\frac{n_r(m)}{\sqrt{P_{s, k}}}\\
    &=\frac{1}{L}\sum_{\ell=1}^L x_{\ell,k}(m),
    \end{split}
    \end{equation}
    where $x_{\ell,k}(m)=x_\ell(m)+c_{s, k}(m)+n_r(m)/\sqrt{P_{s, k}}$. Thus, according to \eqref{dist:noise}, \eqref{dist:clutter} and \eqref{dist:x}, we can obtain the distribution of $x_{\ell,k}(m)$
    \begin{equation}
        x_{\ell,k}(m) \sim \Ncal\left(\mu_{\ell,m}, \sigma_m^2+\sigma_{s,k}^2+\frac{\sigma_r^2}{P_{s, k}}\right),~ \forall (k,\ell).
    \end{equation}
    Finally, the distribution of local feature element $m$ of device $k$ is given by
    \begin{equation}
        x_k(m) \sim \frac{1}{L}\sum_{\ell=1}^L\Ncal\left(\mu_{\ell,m}, \sigma_m^2+\sigma_{s,k}^2+\frac{\sigma_r^2}{P_{s, k}}\right),~ 1\leq k\leq K.
    \end{equation}

\subsection{Proof of Lemma~\ref{lemma:2}}\label{appendix:2}
Denote the optimal solution of the new problem $\Prob{3^\prime}$ as $\left\{ \{P_{s, k}^*\},\{c_{k,m}^*\},\{\fb_m^*\},\alpha^*\right\}$. Assume $\exists m^\prime\in[1,M]$ so that $\fb_{m^\prime}^*$ satisfy the following strict inequality:
\begin{equation}\label{l2:1}
    {c_{k,m^\prime}^*}^2 < {\hb_{k,m^\prime}^*}^H \fb_{m^\prime}^* {\fb_{m^\prime}^*}^H \hb_{k,m^\prime}^* u_{k,m^\prime}.
\end{equation}
Then, based on the continuity of quadratic function on the right-hand side of \eqref{l2:1} and for a fixed $c_{k,m^\prime}^*$, there always exists a number $\eta>0$ such that
\begin{equation}
    \fb_{m^{\prime-}}^* = (1-\eta)\fb_{m^\prime}^* \prec \fb_{m^\prime}^*,
\end{equation}
which leads to 
\begin{equation}
    \begin{aligned}
        {c_{k,m^\prime}^*}^2 &< {\hb_{k,m^\prime}^*}^H \fb_{m^{\prime-}}^* {\fb_{m^{\prime-}}^*}^H \hb_{k,m^\prime}^* u_{k,m^\prime}\\
        &< {\hb_{k,m^\prime}^*}^H \fb_{m^\prime}^* {\fb_{m^\prime}^*}^H \hb_{k,m^\prime}^* u_{k,m^\prime},
\end{aligned}
\end{equation}
where $\xb\prec\yb$ represents $\xb$ is element-wise less than $\yb$. By substituting $\fb_{m^{\prime-}}^*$ for the pair-wise discriminant gain constraint, the value of $\alpha$ can be increased to derive a better optimal value of $\Prob{3}$, which means that $\fb_{m^{\prime-}}^*$ is the optimal solution instead of $\fb_{m^\prime}^*$. However, this is a contradiction of the fact that $\fb_m^*$ is the optimal solution of $\Prob{3^\prime}$. Thus, the problem extended the constraint \eqref{cons:o1b} achieves the same optimal solution as $\Prob{3}$.

\subsection{Proof of Lemma~\ref{lemma:3}}\label{appendix:3}
It is quite apparent that the objective function, the first and second constraints of $\Prob{4}$ are all affine functions. Additionally, $R_{k,m}\left(\fb_m\right)$ is quadratic, which are convex and differentiable. Thus, we only need to prove that $c_{k,m}^2/u_{k,m}$, $Z_{m}\left(\{P_{s,k}\},\{c_{k,m}\},\fb_m\right)$ and $Q_{\ell,\ell^\prime,m}\left(\{c_{k,m}\},v_{\ell,\ell^\prime,m}\right)$ are convex and differentiable. 

Denote $f(x,y)=x^2/y$ with a positive $y$, we can derive the Hessian matrix
\begin{equation}
\Hb_f = 
\begin{bmatrix}
    \frac{2}{y} & -\frac{2x}{y^2} \\
     -\frac{2x}{y^2} & \frac{2x^2}{y^3}
\end{bmatrix},
\end{equation}
where the eigenvalues are 
\begin{equation*}
        \lambda_1 = 0, \quad \lambda_2 = \frac{2(x^2+y^2)}{y^3}.
\end{equation*}

Since $y>0$, both eigenvalues of $\Hb_f$ are non-negative, which indicates the Hessian matrix is positive semidefinite and thus $f(x,y)$ is convex.

By taking $x=c_{k,m}$ and $y=u_{k,m}$, it can be proved that $c_{k,m}^2/u_{k,m}$ is convex. Function $Z_{m}\left(\{P_{s,k}\},\{c_{k,m}\},\fb_m\right)$ is composed of three parts, the first part of which is the sum of $f(x,y)$ with $x=c_{k,m}$ and $y=P_{s,k}$ and the latter two parts are both quadratic. It follows that $Z_{m}\left(\{P_{s,k}\},\{c_{k,m}\},\fb_m\right)$ is convex and differentiable since linear transformation does not violate the convexity. Similar to $Z_{m}\left(\{P_{s,k}\},\{c_{k,m}\},\fb_m\right)$, function $Q_{\ell,\ell^\prime,m}\left(\{c_{k,m}\},v_{\ell,\ell^\prime,m}\right)$ can also be transformed from $f(x,y)$, which proves the convexity and differentiability. Thus, the third and fourth constraints are in the form of difference of convex functions and $\Prob{4}$ is a d.c. problem.

\bibliographystyle{IEEEtran}
\bibliography{refs}{}


 





\end{document}